\definecolor{darkgreen}{rgb}{0,0.5,0}
\definecolor{darkblue}{RGB}{37,70,160}
\definecolor{amber}{rgb}{1.0, 0.75, 0.0}
\definecolor{aquamarine}{rgb}{0.4, 0.8, 0.66}
\definecolor{purple}{RGB}{186, 121, 246}
\definecolor{green}{rgb}{0.2,0.6,0.15}
\tikzset{pics/fake box/.style args={
#1 with dimensions #2 and #3 and #4}{code={\draw[gray,ultra thin,fill=#1]  (0,0,0) coordinate(-front-bottom-left) to
++ (0,#3,0) coordinate(-front-top-right) --++
(#2,0,0) coordinate(-front-top-right) --++ (0,-#3,0)
coordinate(-front-bottom-right) -- cycle;
\draw[gray,ultra thin,fill=#1] (0,#3,0)  --++
 (0,0,#4) coordinate(-back-top-left) --++ (#2,0,0)
 coordinate(-back-top-right) --++ (0,0,-#4)  -- cycle;
\draw[gray,ultra thin,fill=#1!80!black] (#2,0,0) --++ (0,0,#4) coordinate(-back-bottom-right)
--++ (0,#3,0) --++ (0,0,-#4) -- cycle;
\path[gray,decorate,decoration={text effects along path,text={CONV}}] (#2/2,{2+(#3-2)/2},0) -- (#2/2,0,0);
}
}}
\tikzset{circle dotted/.style={dash pattern=on 0.05mm off 2mm,line cap=round}}
\newcommand{\PW}{\mathcal{PW}}
\newcommand{\Integer}{\mathbb{Z}}
\newcommand{\Natural}{\mathbb{N}}
\newcommand{\Real}{\mathbb{R}}
\DeclareMathOperator{\sinc}{sinc}
\DeclareMathOperator{\supp}{supp}
\DeclareMathOperator{\DWF}{DWF}
\DeclareMathOperator{\WF}{WF}
\let\emptyset\varnothing
\newtheorem{theorem}{Theorem}[section]
\newtheorem*{theorem*}{Theorem}
\newtheorem{remark}[theorem]{Remark}
\newtheorem{definition}[theorem]{Definition}
\newtheorem*{remark*}{Remark}
\newtheorem*{proposition*}{Proposition}
\numberwithin{equation}{section}
\DeclareMathOperator{\suppp}{supp \,}
\definecolor{darkcandyapplered}{rgb}{0.64, 0.0, 0.0}
\title{Extraction of digital wavefront sets using applied harmonic analysis and deep neural networks}
\author{Héctor Andrade-Loarca\footnotemark[1]\and Gitta Kutyniok\footnotemark[1]\and Ozan \"Oktem\footnotemark[2]\and Philipp Petersen\footnotemark[3]}
\date{}
\begin{document}
\maketitle

\footnotetext[1]{Institut f\"ur Mathematik, Technische Universit\"at Berlin, 10623 Berlin, Germany, \texttt{$\{$kutyniok,andrade$\}$@math.tu-berlin.de}}
\footnotetext[2]{Department of Mathematics, KTH - Royal Institute of Technology,
SE-100 44 Stockholm, Sweden, \texttt{ozan@kth.se}}
\footnotetext[3]{
Mathematical Institute, University of Oxford, OX2 6GG, Oxford, UK, \texttt{Philipp.Petersen@maths.ox.ac.uk}}

\begin{abstract}
Microlocal analysis provides deep insight into singularity structures and is often
crucial for solving inverse problems, predominately, in imaging sciences. Of particular
importance is the analysis of wavefront sets and the correct extraction of those. In
this paper, we introduce the first algorithmic approach to extract the wavefront set 
of images, which combines data-based and model-based methods. Based on a celebrated property of the shearlet transform to unravel information on the wavefront set, we extract the wavefront set of an image by first applying a discrete shearlet transform and then feeding local patches of this transform to a deep convolutional neural network trained on labeled data.
The resulting algorithm outperforms all competing algorithms in edge-orientation and ramp-orientation detection.
\end{abstract}

\medskip
\noindent
\textbf{Keywords:} Wavefront set, deep learning, convolutional neural networks, shearlets.
\smallskip

\noindent
\textbf{Mathematics Subject Classification:} 35A18, 65T60, 68T10.

\section{Introduction}
Many scientific and industrial real-world applications require a precise understanding of how a model parameter, represented by a function, is transformed under some process that is described by an operator.
Such analysis quickly becomes very challenging, and one attempt at simplifying it is to treat the singular (non-smooth) and smooth parts of the function separately.
In fact, a significant portion of the useful information is often contained in the singular part. For images, this singular part corresponds to edges, ridges, or ramps in the image.

Microlocal analysis is a powerful mathematical theory that aims to precisely describe how the singular part of a function, or more generally a distribution, is transformed when acted upon by an operator. Since its introduction in the early 1970s by Sato \cite{Sato:1971aa} and Hörmander \cite{Hormander:1971aa}, it has proven itself useful in both pure and applied mathematical research.
The crucial underlying observation in microlocal analysis is that the information about the location of the singularities (singular support) needs to be complemented with specifying directional information of the singularities. This extra (``microlocal'') information is key in elucidating the propagation of singularities by a certain class of operators. This class includes Fourier integral operators such as most differential and pseudo-differential operators as well as many integral operators arising in integral geometry. Such operators are frequently encountered in analysis, scientific computing, and physical sciences \cite{Hormander:1971aa, candes2007fast}.

Microlocal analysis is also particularly useful in inverse problems, where the goal is to reliably recover a hidden model parameter (signal) from a noisy transformed version.
The goal here is to recover the wavefront set of the signal (image) given the noisy realization of a transformed version of the signal.
Such applications frequently arise when using imaging/sensing technologies where the transform is a pseudo-differential or Fourier integral operator \cite{krishnan2015microlocal}.

A prime example exhibiting the utility of microlocal analysis in inverse problems appears in the context of imaging applications. Here, the image is {represented by} a function describing the interior structure of the object under study. 
Recovering the image is, however often not possible, either because the transformation relating the image to data is not invertible or because data is incomplete.
On the other hand, full knowledge of the image is not always necessary.
As an example, if one is looking for a tumor, then its location and shape are often sufficient for decision making whereas the exact values of the tumor density may be ignored or identified with another more specialized technique.
The location and shape of the tumor can be readily determined from the singular part of the image.

For the above reason, there exist a plethora of applications of microlocal analysis to tomographic imaging. In these applications, the transformation relating the image to data is the ray transform which can be interpreted as a Fourier integral operator. Thereby, it is possible to explicitly describe the relation between the wavefront set of the function (image) and its transformed version (tomographic data).
Such relationships are referred to as \emph{(microlocal) canonical relations}.

The canonical relation can also be used to identify which singularities can be recovered from data without explicitly computing the inverse ray transform.
This was done in \cite{quinto1993singularities} for the case when the 2D/3D ray transform is restricted to parallel lines and in \cite{quinto2008local} for an analysis in the region-of-interest limited angle setting.
A related principle was derived in \cite{katsevich1995helical} for the 3D ray transform restricted to lines given by helical acquisition that is common in medical imaging.
Similar principles hold for transforms integrating along other types of curves, for example, ellipses with foci on the x-axis and geodesics \cite{uhlmann2012geodesic}.
Another observation is that recovering a signal from its ray transform is less ill-posed if one knows the wavefront set a priori. This was demonstrated in \cite{devison1983limited} where the severely ill-posed reconstruction problem in limited angle tomography becomes mildly ill-posed if the wavefront set of the solution is provided as prior information, see \cite{quinto2008local} for an application of this principle to cryo-electron tomography.

Additionally, the concept of the wavefront set is of significant interest in mathematical image processing outside of inverse problems. 
Directed singularities, in particular edges, ridges, or ramps play an essential role in image processing to the extent that edge detection is one of the principal problems of this field.
This is due to the fact, that edges in images present boundaries of objects and carry most of the information of the associated physical scene \cite{4767769, marr1980theory, BINFORD1981205, brady1982computational}.
Additionally, it has been argued in \cite{marr1980theory} that the human visual cortex performs multiple operations of image processing,
the first of which is rough sketching involving edge detection.

\subsection{Wavefront set extraction}
A necessary first step before applying techniques from microlocal analysis in real-world applications in the way described above is to extract the wavefront set of functions. By the definition of the wavefront set, this amounts to estimating the asymptotic behavior of the localized Fourier transform of the function in question. 

\paragraph{Extraction from finitely many samples.}
In applications, where only finitely many point samples of the underlying function are known, estimating the asymptotic behavior of the Fourier transform is usually not possible. Indeed, we show in Section~\ref{sec:noWaveFrontSetExtractor} that \emph{every method} that seeks to approximately extract the wavefront set of functions from a function class $\mathcal{C} \subset L^2$ will \emph{fail} on a dense subset of $\mathcal{C}$ as soon as $\mathcal{C}$ contains at least all $k$-times differentiable functions for an arbitrary $k \in \Natural$.

The problem of extracting the wavefront set of functions from a function class $\mathcal{C}$ certainly becomes easier the smaller $\mathcal{C}$ gets. Indeed, in applications, where, for example, $\mathcal{C}$ models a class of images, it is reasonable to assume that $\mathcal{C}$ is a very small subset of $L^2$ and does not contain all $k$-times differentiable functions for any $k \in \Natural$. Hence, in this situation, wavefront set extraction from point samples explicitly designed for the class of images could be feasible. In fact, if the set $\mathcal{C}$ is so small that every function in $\mathcal{C}$ is uniquely determined through its samples on the grid, then wavefront set detection could, in principle, be performed through a large data base, which could be learned from $\mathcal{C}$. 

{From the above}, it is clear that a {useful} wavefront set extractor needs to {closely adapted to the underlying function class $\mathcal{C}$. This strong adaptation to $\mathcal{C}$ will henceforth be referred to as} our \emph{guiding principle}.

\paragraph{Classification with applied harmonic analysis.}

Certain transforms from applied harmonic analysis, like the curvelet and shearlet transform, offer an alternative possibility to identify the wavefront set. In particular, the connection between the behavior of these transforms, and the wavefront set has been analyzed in \cite{CANDES2005162, kutyniok2009resolution}, with a particular application for edge detection in \cite{yi2009shearlet}. We will recall these results in Section~\ref{sec:WaveFrontSet}. These approaches characterize the wavefront set through the rate of decay of the respective transforms. In this way, one can transform the problem of extracting the wavefront set of a function to a classification problem on the decay of another function. 
While this point of view certainly makes the wavefront set more accessible, especially since it does not depend on an unspecified localization procedure, it is still subject to the fundamental restrictions of the previous subsection, i.e., it cannot produce a successful wavefront set extractor acting on sampled functions.

\paragraph{Data driven wavefront set extraction.}
Bearing in mind the aforementioned guiding principle, a successful wavefront set extractor needs to be tailored to the function class of interest. The relevant function classes in applications are, however, difficult to characterize analytically. An alternative is, therefore, to adopt a \emph{data-driven model} where the function class of interest is given empirically through examples.

Based on the above, we propose the following algorithm coined DeNSE: First, we assemble a supervised training set consisting of images with their associated wavefront set, or a suitable surrogate. Second, we train a classifier---in our case a \emph{deep neural network}---to predict the wavefront set from the \emph{shearlet coefficients} of the training data. 
Finally, we apply the resulting classifier on unseen data.  

In this way, the algorithm combines two crucial elements: On the one hand, as mentioned previously, the interaction of the shearlet transform with the wavefront set is theoretically well understood and presents the microlocal information of a function in a more accessible way. On the other hand, the trained classifier allows a strong adaptation to the underlying function class, thereby complying with our guiding principle.

In Section~\ref{sec:Algorithm}, we present the construction of the algorithm alongside the training data that is used. In this context, we analyze the detection of edges and orientations in images as well as higher-order wavefront set detection in sinogram data.
We shall see below, in Section~\ref{sec:numRes}, that this method outperforms all conventional edge-orientation estimators as well as alternative data-driven methods including the current state-of-the-art. Moreover, we are unaware of any wavefront set extractor in the literature that goes beyond edge or ramp detection, so that the following analysis can be seen as the first advance in this direction. 

\subsection{Expected impact}
The DeNSE method has many applications, some of which are outlined below:

\begin{itemize}
\item \emph{Solving inverse problems:} Many inverse problems involve a pseudo-differential or a Fourier integral operator. Moreover, the associated microlocal canonical relations can be explicitly derived for such operators. This gives a relation between the wavefront sets of measurement data and the unknown signal one seeks to recover. In particular, one may use DeNSE to extract the wavefront set of data and then use the relevant microlocal canonical relation to compute the wavefront set of the unknown signal without solving the inverse problem. Section~\ref{sec:SinogramResults} demonstrates this approach in the context of tomography.


 \item \emph{Regularization of inverse problems:} 
A priori knowledge of the relation between the wavefront sets for data and signal (the solution) can be used to regularize an ill-posed inverse problem. An example of this, which used shearlets for the identification of the wavefront set, was demonstrated in \cite{bubba2018learning} in the context of limited angle tomography. This approach is, however, specifically tailored for inverse problems involving the Radon transform. In contrast, our DeNSE based approach is more generic and applies readily to a wide range of inverse problems, see Section~\ref{sec:RadonTransform} for further details.

\item \emph{Edge detection:} Detecting edges, ridges, or points of higher-order non-smoothness is a sub-task of wavefront set extraction. As we will observe below, DeNSE outperforms state-of-the-art competing edge-orientation detectors on a wide range of test cases. Moreover, except for ridge and ramp detection, the detection of points of higher-order non-smoothness has---to the best of our knowledge---not been pursued in the literature, but is possible with DeNSE without adaptations.
\end{itemize}

\subsection{Basic concepts and notation}

Below, we collect the notation used throughout this manuscript. This notation is fairly standard in the literature, and hence this subsection can be skipped and only consulted if a notation is unclear.

$\Real$, $\Natural$, and $\Integer$ denote the set of real numbers, natural numbers, and integers, respectively.
Next, given a point $x \in \Real^n$ and $r >0$, we use $B_{r}(x)$ to denote the ball of radius $r$ in $\Real^n$ with center at $x$. Likewise, $\mathbb{S}^{n-1}$ denotes the unit sphere in $\Real^n$.
Furthermore, the boundary of a domain $\Omega \subset \Real^d$ for $d \in \Natural$ is denoted by $\partial \Omega$.

Let $\Omega \subset \Real^d$ be a fixed domain for some $d \in \Natural$. Then, $L^2(\Omega)$ is the space of Lebesgue square-integrable functions on $\Omega$, $C^n(\Omega)$ is the space of $n$-times continuously differentiable functions defined on $\Omega$, and $H^n(\Omega)$ is the space of $n$-times weakly differentiable functions whose weak derivatives are in $L^2(\Omega)$. The support of a measurable function $f\colon \Real^d \to \Real$ is denoted by $\suppp f$. Furthermore, the Fourier transform of a function $f \in L^1(\Real^d)$ is defined as
\begin{align*}
    \hat{f}(\xi) \coloneqq \int_{\Real^d} f(x) e^{-2\pi i \langle x, \xi \rangle } dx \quad \text{ for } \xi \in \Real^d.
\end{align*}
The Fourier transform operator $f \mapsto \hat{f}$ can be extended to an isometry on $L^2(\Real^n)$ by Plancherel's identity. The Paley-Wiener spaces are formed by functions the Fourier transform of which is compactly supported. More precisely, for $\Lambda \in \Real_+$, we define
\[
PW_{\Lambda} \coloneqq \bigl\{ f \in L^2(\Real^d) \colon \suppp \hat{f} \in [-\Lambda, \Lambda]^d \bigr\}.
\]
Finally, we use the Landau symbol $\mathcal{O}$ to describe asymptotic behavior, i.e., for functions $f, g \colon \Real \to \Real$, we write $f(x) = \mathcal{O}(g(x))$ as $x \to a$ whenever there exists a constant $c>0$ such that $|f(x)| \leq c |g(x)|$ for all $x$ in a neighborhood of $a$. Similarly, we write $f(x) = \mathcal{O}(g(x))$ as $x \to \infty$ whenever there exists a constant $c>0$ such that $|f(x)| \leq c |g(x)|$ for $|x|$ sufficiently large.

\section{Directional multiscale systems and the wavefront set} \label{sec:WaveFrontSet}

We start by formally introducing the notion of a wavefront set followed by the definition of the directional multiscale system of shearlets.
We then show how shearlets can indeed be used to resolve the wavefront set of a distribution. Similar results also hold for other multiscale systems, like the curvelet transform \cite{CANDES2005162} and the more general continuous parabolic molecules \cite{grohs2015continuous}.

\subsection{The wavefront set}
The wavefront set can be defined for any distribution on a manifold, but since we only deal with $L^2(\Real^2)$ functions, we restrict the definition to this setting.

\begin{definition}[{\cite[Section 8.1]{AnLinPDOHoermander}}]\label{def:WaveFrontSet}
Let $f \in L^2(\Real^2)$ and $k \in \Natural$. A point $(x, \lambda)\in \Real^2 \times \mathbb{S}^1$ is a \emph{$k$-regular directed point of $f$} if there exist open neighborhoods $U_x$ and $V_\lambda$ of $x$ and $\lambda$, respectively and a smooth function $\phi\in C^\infty(\Real^2)$ with $\suppp \phi \subset U_x$ and $\phi(x) = 1$ such that
\begin{equation*}
    \bigl| \widehat{\phi f} (\xi) \bigr|
    \leq C_k \bigl( 1+|\xi| \bigr)^{-k}
    \quad \text{for all $\xi \in \Real^2 \setminus \{0\}$ such that $\xi /|\xi| \in V_\lambda$}
\end{equation*}
holds for some $C_k >0$. The \emph{$k$-wavefront set} $\WF_k(f)$ is the complement of the set of all $k$-regular directed points and the \emph{wavefront set} $\WF(f)$ is defined as
\[
\WF(f) \coloneqq \bigcup_{k \in \Natural} \WF_k(f).
\]
\end{definition}
The definition of the wavefront set is based on the well-known characterization of smoothness of a function in terms of the decay of its Fourier transform. More precisely, a function is smooth at a point if its Fourier transform decays faster than any polynomial in any direction.
As an example, the \emph{singular support} of $f$, i.e., the smallest closed set $U$ such that $f_{|U^c} \in C^\infty(U^c)$ can be characterized in terms of the wavefront set as
\[
\bigl\{ x \in \Real^2 : (x, \lambda) \in \WF(f)\text{ for some $\lambda \in \mathbb{S}^1$} \bigr\}.
\]
The wavefront set is a refined notion of the singular support, since it not only indicates at which points a function is not smooth, but also contains the associated directions causing the non-smoothness. A common example considers $f$ with a jump singularity across the smooth boundary of a fixed domain $D \subset \Real^2$. Then, one can show \cite[Chapter VI, Exercise 1.1]{taylor1981pseudodifferential} that
\begin{equation}\label{eq:wavefrontSet}
\WF(f) = \bigl\{(x, \lambda) \in \Real^2\times \mathbb{S}^1 \colon x \in \partial D \text{ and } \lambda = n_x \text{ where $n_x$ is a normal on $\partial D$ at $x$}
\bigr\}.
\end{equation}

The wavefront set is a very powerful tool in mathematical analysis, but it is difficult to compute in practice. This is mainly due to the asymptotic criteria involved in its definition, which means  computing the wavefront set requires computing the ``full'' Fourier transform at every point.
Continuous transforms associated to certain directional multiscale systems offer a convenient remedy. As an example, the shearlet transform automatically performs the necessary time-frequency-orientation localization described above, thereby ``resolving'' the wavefront set in a sense described in the following subsection.

\subsection{Shearlets}\label{sec:Shearlets}
The shearlet transform, which was introduced in \cite{gitta2005shearlets}, is based on applying translation, anisotropic dilation, and shearing to generator functions. To dilate and shear a function, we define the following three matrices:
\begin{equation*}
    A_a\coloneqq \begin{pmatrix}
        a & 0 \\
        0 & \sqrt{a}
    \end{pmatrix},
    \quad
    \widetilde{A}_a\coloneqq \begin{pmatrix}
        \sqrt{a} & 0 \\
        0 & a
    \end{pmatrix},
    \quad\text{and}\quad
    S_s \coloneqq \begin{pmatrix}
        1 & s \\
        0 & 1
    \end{pmatrix}
    \quad\text{for $a>0$ and $s\in \Real$.}
\end{equation*}
Next, given $(a,s,t) \in \Real_+ \times \Real \times \Real^2$, $\psi \in L^2(\Real^2)$, and $x\in \Real^2$, we define
\begin{equation}\label{eq:shearlets}
\psi_{a,s,t, 1}(x) \coloneqq a^{-\frac{3}{4}} \psi\left(A_a^{-1}S_s^{-1} (x-t)\right)
\quad\text{and}\quad
\psi_{a,s,t, -1}(x) \coloneqq a^{-\frac{3}{4}} \widetilde{\psi}\left(\widetilde{A}_a^{-1}S_s^{-T} (x-t)\right),
\end{equation}
where $\widetilde{\psi}(x_1,x_2) \coloneqq \psi(x_2,x_1)$ for all $x = (x_1,x_2) \in \Real^2$.
Following \cite{grohs2011continuous}, we define the continuous shearlet transform as follows:
\begin{definition}[Continuous shearlet transform]
Let $\psi \in L^2(\Real^2)$. Then the family of functions $\psi_{a,s,t, \iota} \colon \Real^2 \to \Real$ parametrized 
by $(a,s,t, \iota) \in \Real^+ \times \Real \times \Real^2 \times \{-1,1\}$ that are defined in \eqref{eq:shearlets} is called a \emph{shearlet system}.
The corresponding \emph{(continuous) shearlet transform} is defined by
\[
    \mathcal{SH}_\psi: L^2(\Real^2) \to L^\infty\bigl(\Real^+ \times \Real \times \Real^2 \times \{-1,1\}\bigr)
    \quad\text{where}\quad
    \mathcal{SH}_\psi(f)(a,s,t,\iota) \coloneqq \langle f, \psi_{a,s,t,\iota}\rangle.
\]
\end{definition}
As we shall see next, if the generator function $\psi$ has directional vanishing moments, then the asymptotic behavior as $a \to 0$ of the continuous shearlet transform of an $L^2$-function $f$ characterizes its wavefront set.
The precise statement in \cite{grohs2011continuous} reads as follows.

%
\begin{theorem}\label{thm:charOfWavefront}
Let $f \in L^2(\Real^2)$ and assume $(x_0, \lambda_0) \in \Real^2 \times \mathbb{S}^1$ is a $k$-regular directed point of $f$ for some $k \in \Natural$.
Next, consider a continuous shearlet system with generator function $\psi \in H^l(\Real^2)$, $l \in \Natural$, with Fourier transform $\widehat{\psi} \in L^1(\Real^2)$ where $\psi$ has $m \in \Natural$ vanishing moments in $x_1$-direction, i.e.,
\[
\int_{\Real^2} \frac{\bigl| \widehat{\psi}(\xi_1,\xi_2) \bigr|^2}{|\xi_1|^{2m}} d \xi < \infty.
\]
Finally, assume $\psi$ displays the following asymptotic behavior for $p \in \Natural$:
\[
\bigl| \psi(x) \bigr| = \mathcal{O}\bigl((1+|x|)^{-p}\bigr)
\quad\text{for $|x| \to \infty$.}
\]
Then, there exist a neighborhood $U_{0} \subset \Real^2$ of $x_0$ and a neighborhood $S_{0} \subset \mathbb{S}^1$ of $\lambda_0$ such that
\[
\bigl| \mathcal{SH}_\psi(f)(a,s,x,\iota) \bigr|
  = \mathcal{O}\Bigl(
      a^{\frac{p}{2} - \frac{3}{4}} + a^{\frac{m}{4}} + a^{\frac{3k}{4} - \frac{3}{4}} + a^{\frac{3l}{4}}
    \Bigr)
\quad\text{as $a \to 0$}
\]
for all $x \in U_{0}$ and all $s\in \Real$ and $\iota \in \{-1,1\}$ such that $\lambda(s, \iota) \in  S_{0}$, where \begin{equation}\label{eq:LambdaEq}
 \lambda(s, \iota) \coloneqq
     \begin{cases}
       \left( \dfrac{1}{\sqrt{s^2+1}}, \dfrac{s}{\sqrt{s^2+1}} \right) & \text{if $\iota = 1$,}
       \\[1em]
       \left( \dfrac{s}{\sqrt{s^2+1}}, \dfrac{1}{\sqrt{s^2+1}} \right) & \text{if $\iota = -1$.}
     \end{cases}
\end{equation}
\end{theorem}
\begin{remark}\label{rem:InverseTheorem}
Under suitable assumptions on the shearlet generator $\psi$, the converse of Theorem~\ref{thm:charOfWavefront} holds as well.
More precisely, following \cite{grohs2011continuous}, assume that $\psi$ is sufficiently regular for any $k \in \Natural$.
Next, let $(x_0,\lambda_0) \in \Real^2 \times \mathbb{S}^1$ and assume there exist a neighborhood $U_{0} \subset \Real^2$ of $x_0$ and a neighborhood $S_{0} \subset \mathbb{S}^1$ of $\lambda_0$ such that
\[
\bigl| \mathcal{SH}_\psi(f)(a,s,x,\iota) \bigr| = \mathcal{O}(a^{n})
\quad\text{as $a \to 0$},
\]
holds for sufficiently large $n \in \Natural$ uniformly for $x \in U_0$ and all $s\in \Real$, $\iota \in \{-1,1\}$ such that $\lambda(s, \iota) \in S_{0}$.
Then, $(x_0,\lambda_0)\not \in \WF_k(f)$.
\end{remark}
Theorem~\ref{thm:charOfWavefront} and Remark~\ref{rem:InverseTheorem} demonstrate that the wavefront set is completely determined by the decay properties of the shearlet transform. This implies that in the continuous setting, one can compute the wavefront set of a function by first computing its continuous shearlet transform, then analyzing the pairs of point and direction where this shearlet transform exhibits rapid decay as $a \to 0$.

Theorem~\ref{thm:charOfWavefront} and Remark~\ref{rem:InverseTheorem} were first reported in \cite{kutyniok2009resolution} in a setting restricted to a specific shearlet generator (called the  ``classical shearlet'').
Moreover, results similar to Theorem~\ref{thm:charOfWavefront} and Remark~\ref{rem:InverseTheorem} were obtained in  \cite{CANDES2005162} for the curvelet transform and in \cite{fell2016resolution} for transforms stemming from general group representations.
As shown in \cite{grohs2015continuous}, all transforms that belong to the category of continuous parabolic molecules admit a similar characterization of the wavefront set.
Finally, one can also use the shearlet transform to classify certain geometric properties of the singularities of a function that goes beyond differentiating between rapid and non-rapid decay of the shearlet transform, see e.g. \cite{yi2009shearlet,guo2009edge,kutyniok2017classification}.

\section{Wavefront set {of sampled functions}}\label{sec:noWaveFrontSetExtractor}
{Here we analyze whether} it is possible to construct an operator that maps a {finitely sampled function} $f$ to an estimate of {its} wavefront set.
This {typically arises in} practical applications, e.g., images are only {given} as pixels representing point samples of a real-valued function.

It is natural to use Shannon's sampling theorem to make the connection between a sampled function and its wavefront set more precise. The theorem is stated in Subsection~\ref{sec:Shannon} and -- based on it --  {Subsection~\ref{sec:NotOfWave} introduces} the notion of an approximate wavefront set extractor. Finally, in Subsection~\ref{sec:NoWaveExtr} we show that any approximate wavefront set extractor on a function class $\mathcal{C}\subset L^2(\Real^2)$ 
that predicts the wavefront set of a function $f \in \mathcal{C}$ from a finite number of sample values will fail on a dense subset of $\mathcal{C}$ if 
$\mathcal{C}$ contains at least all $C^k(\Real^2) \cap L^2(\Real^2)$ functions for an arbitrary $k \in \Natural$. This result holds even if the sampling density is allowed to depend on the function $f$.

\subsection{Sampling theorem and Paley-Wiener spaces}\label{sec:Shannon}
The sampling theorem states that every band-limited function $f$ can be written as a sum of shifted cardinal sine functions weighted by point samples of $f$. In other words, a band-limited function is fully determined by its values on a discrete grid.
To give the precise statement, we introduce the Paley-Wiener spaces. Given $\Lambda>0$, the Paley-Wiener space $\PW_{\Lambda} \subset L^2(\Real^d)$ is defined as
\[
  \PW_{\Lambda} \coloneqq
  \left\{ f \in L^2\left(\Real^d\right) \colon \supp\big(\widehat{f} \, \big) \subset [-\Lambda,\Lambda]^d \right\}.
\]
We define the $d$-dimensional sinc-function as
\[
\sinc_{d}(x) \coloneqq \prod_{i=1}^d  \frac{\sin(\pi x_i)}{\pi x_i},
\quad\text{where } x = (x_1, \dots, x_d) \in \Real^d.
\]
Bearing in mind the above notation, we now state the sampling theorem, see, e.g., \cite{marks2012introduction}.
\begin{theorem}[Sampling theorem] \label{thm:SamplingTheorem}
Let $d \in \Natural$, $f \in L^2(\Real^d)\cap C(\Real^d)$ and $\Lambda>0$. Then
\[
f \in \PW_{\Lambda}
\iff
f(x) = \sum_{n \in \Integer^d} f\left(\frac{n}{\Lambda}\right)
  \sinc_{d}(\Lambda \cdot x- n )\quad \text{ for all $x \in \Real^d$.}
\]
\end{theorem}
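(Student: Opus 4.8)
The plan is to prove both implications through the single observation that, after applying the (unitary) Fourier transform, band-limited functions become ordinary $L^2$ functions on a cube, where the classical theory of multivariate Fourier series applies. Since every object in the statement factorizes across coordinates -- the cube $[-\Lambda,\Lambda]^d$, the lattice $\tfrac{1}{\Lambda}\Integer^d$, and the kernel $\sinc_d(\Lambda x - n) = \prod_{i=1}^d \sinc(\Lambda x_i - n_i)$ -- it suffices in spirit to treat $d=1$, although the multivariate Fourier-series argument goes through verbatim.

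First I would record the spectral picture of the reconstruction kernels. A direct computation with the stated Fourier normalization shows that each translated and dilated sinc, $\sinc_d(\Lambda\, \cdot\, - n)$, has Fourier transform equal to a constant multiple of $\mathbf{1}_{Q}(\xi)\, e^{-2\pi i \langle n, \xi\rangle/\Lambda}$, where $Q$ is the frequency cube whose side length is dictated by the dilation factor $\Lambda$. By Plancherel, the inner products $\langle \sinc_d(\Lambda\,\cdot\, - m), \sinc_d(\Lambda\,\cdot\, - n)\rangle$ reduce to integrals of complex exponentials over $Q$, which by orthogonality of the exponential system are proportional to $\delta_{m,n}$. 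Hence, after normalization, the family $\{\sinc_d(\Lambda\,\cdot\, - n)\}_{n \in \Integer^d}$ is an orthonormal system whose closed linear span is precisely the space of $L^2$ functions whose Fourier transform is supported in $Q$; completeness here is exactly the completeness of the exponential basis on $Q$.

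For the forward implication, I take $f \in \PW_\Lambda$, so that $\widehat f \in L^2$ vanishes off the cube. Expanding $\widehat f$ in the exponential basis adapted to the cube and reading off the coefficients via the Fourier inversion formula evaluated at the lattice points identifies each coefficient with a sample of $f$; substituting this expansion back into the inversion integral and interchanging summation with integration (legitimate by square-summability of the coefficients) collapses the integral into the reproducing kernel, yielding a reconstruction series of the stated form. The resulting $L^2$-convergence upgrades to pointwise, indeed locally uniform, convergence because band-limited $L^2$ functions are continuous and the partial sums are uniformly controlled. The converse implication is then immediate from the previous paragraph: if $f$ is given by such a series, it lies in the closed span of the $\sinc_d(\Lambda\,\cdot\, - n)$, hence its Fourier transform is supported in the cube and $f \in \PW_\Lambda$; the only point needing care is to reconcile the given pointwise identity with convergence in $L^2$, after which membership in the span is automatic.

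The main obstacle I anticipate is not any single estimate but the bookkeeping that ties the three scales together consistently: the sampling lattice $\tfrac{1}{\Lambda}\Integer^d$, the side length of the frequency cube $Q$, and the dilation inside $\sinc_d$. Getting these to match is precisely the \emph{Nyquist condition}, and it is the step where the multiplicative constants -- in particular, the relation between the cube $[-\Lambda,\Lambda]^d$ in the definition of $\PW_\Lambda$ and the support cube $Q$ of the kernels -- must be tracked with care, since an error here changes the admissible sampling rate by a constant factor. Once the exponential system on $Q$ is verified to be both orthogonal and complete for the relevant cube, both directions follow from the unitarity of the Fourier transform with no further analytic input.
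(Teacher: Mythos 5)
The paper itself offers no proof of Theorem~\ref{thm:SamplingTheorem}: it is stated with a pointer to the literature (the citation of \cite{marks2012introduction}), so your proposal can only be measured against the classical Fourier-series argument, and that is indeed the route you take --- diagonalize everything with the Fourier transform, recognize the shifted sincs as constant multiples of exponentials times the indicator of a frequency cube $Q$, and invoke orthogonality and completeness of the exponential system on $Q$. As a strategy this is the standard and correct one.

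However, the issue you defer as ``bookkeeping'' is a genuine gap, and in fact it cannot be closed, because with the paper's own conventions the statement is off by the Nyquist factor. With $\widehat{f}(\xi) = \int f(x)e^{-2\pi i \langle x,\xi\rangle}\,dx$ and $\sinc_d(x) = \prod_{i=1}^d \sin(\pi x_i)/(\pi x_i)$, the Fourier transform of $\sinc_d(\Lambda\,\cdot\,-n)$ equals $\Lambda^{-d}e^{-2\pi i \langle n,\xi\rangle/\Lambda}\mathbf{1}_{Q}(\xi)$ with $Q = [-\Lambda/2,\Lambda/2]^d$, and the exponentials $e^{-2\pi i\langle n,\cdot\rangle/\Lambda}$, $n \in \Integer^d$, are complete on a cube of side length $\Lambda$, not $2\Lambda$. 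Consequently your argument, carried out carefully, proves the equivalence of the sampling formula with $\supp \widehat{f} \subset [-\Lambda/2,\Lambda/2]^d$, i.e.\ with $f \in \PW_{\Lambda/2}$, not with $f \in \PW_{\Lambda}$ as in the statement. The forward implication as literally printed is false: take $f(x) = \prod_{i=1}^d \sin^2(\pi\Lambda x_i)/(\pi\Lambda x_i)^2$; its Fourier transform is a tensor product of triangle functions supported in $[-\Lambda,\Lambda]^d$, so $f \in \PW_\Lambda \cap C(\Real^d)\cap L^2(\Real^d)$, yet $f(n/\Lambda) = \delta_{n,0}$, so the right-hand side of the theorem equals $\sinc_d(\Lambda x) \neq f(x)$. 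So either the theorem is read with $\PW_{\Lambda/2}$ (equivalently, with samples $f(n/(2\Lambda))$ and kernel $\sinc_d(2\Lambda x - n)$), in which case your proof establishes it, or no proof exists; your proposal should say this outright rather than treat the cube mismatch as a constant to be tracked. A secondary, fixable point: in the converse direction you still owe an argument that the assumed pointwise identity forces $L^2$-convergence of the series (square-summability of the samples $f(n/\Lambda)$ is not part of the hypothesis), without which ``membership in the closed span'' of the sinc system is not automatic.
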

In particular, for every summable sequence $(y_k)_{k\in \Integer^2}$ we can define
\[
    f(x) \coloneqq \sum_{k \in \Integer^2} y_k \, \sinc_{2}(\Lambda \cdot x -k),\quad \text{ for }x \in \Real^d,
\]
and, by Theorem~\ref{thm:SamplingTheorem}, $f$ is band-limited.
Furthermore, since $\sinc_2(m-n)$ vanishes for every $m, n \in \Integer^2$ such that $m \neq n$, we observe that $f(m/\Lambda) = y_m$ for all $m \in \Integer^2$.
In other words, every sequence on a grid defines an associated interpolating band-limited function and conversely, every band-limited function is uniquely determined by its values on a discrete grid.

As a consequence, the problem of extracting the wavefront set of a function $f\in L^2(\Real^2) \cap C(\Real^2)$ from its discrete sampled values $(f(m/\Lambda))_{m\in \Integer^2}$ can be re-stated as extracting the wavefront set of $f$ from its projection onto a Paley-Wiener space, i.e., $P_{\PW_{\Lambda}}(f)\eqqcolon P_\Lambda(f)$.

For functions $f \in L^2(\Real^2)$, which are only defined almost everywhere, we can even use the sampling theorem as a definition of a point evaluation. For $f \in L^2(\Real^2)$, we have that $P_{\Lambda}(f)\in C(\Real^2)$. Hence, we set $f(m/\Lambda) \coloneqq P_{\Lambda}(f)(m/\Lambda)$.

\subsection{Wavefront set extractors}\label{sec:NotOfWave}
As already stated, the problem of extracting the wavefront set from samples on a grid is equivalent to extracting the wavefront set given the projection onto a Paley-Wiener space. There are multiple conceivable notions of a wavefront set extractor. First, for $\Lambda >0$, we could ask for a map
\begin{align}\label{eq:OneShotExtractor}
\DWF_\Lambda: \PW_\Lambda \to P\left(\Real^2 \times \mathbb{S}^1 \right) \quad\text{such that $\DWF(P_{\Lambda} f ) = \WF(f)$ for all $f \in L^2(\Real^2)$.}
\end{align}
Here $P(\Real^2 \times \mathbb{S}^1)$ denotes the power set of $\Real^2 \times \mathbb{S}^1$.
Essentially, this map requests extraction of the wavefront set of a function $f$ from knowledge of the samples of $f$ on a fixed grid. It is clear that such a map, $\DWF_\Lambda$, cannot exist, since it will fail for functions $f$ that have fine structures which cannot be detected by coarse sampling.
For example, a function that vanishes on every grid point of $\Integer^2/\Lambda$ while having a non-trivial wavefront set would be classified the same as the zero function.

A more reasonable model for a wavefront set extractor should give an approximate prediction of the wavefront set 
that eventually improves as the sampling density increases.
To weaken this statement even further, we might only ask for approximate extraction of the wavefront set at one point and only for functions from a fixed function class $\mathcal{C}\subset L^2(\Real^2)$.
For a fixed set $W \subset \Real^2 \times \mathbb{S}^1$ and a point $x \in \Real^2$, we define
\[
    W_x \coloneqq \bigl\{ \lambda \in \mathbb{S}^1 : (x, \lambda) \in W \bigr\}.
\]
We can now model the approximation described above by considering a sequence of wavefront set extractors given by
\begin{equation}\label{eq:SeqOfWFExtr}
  \DWF_j \colon \PW_j \to \mathcal{P}\left({\Real^2 \times \mathbb{S}^1}\right)
  \quad\text{for $j\in \Natural$}
\end{equation}
such that, for fixed $x \in \Real^2$, and all $f \in \mathcal{C}$,
\begin{align}\label{eq:clairvoyance}
d_H\left( \overline{\DWF_j( P_j(f) )_x}, \overline{\WF(f)_x} \right) \to 0.
\end{align}
Here $d_{H}$ denotes the Hausdorff distance with the convention $d_H(X, \emptyset) = d_H(\emptyset, X) := 1$ for any non-empty compact subset $X \subset \mathbb{S}^1$ and $d_H(\emptyset,\emptyset) := 0$.
Recall that with this definition $d_{H}$ is a metric on compact subsets of $\mathbb{S}^1$ (including the empty set).

A sequence as in \eqref{eq:SeqOfWFExtr} satisfying \eqref{eq:clairvoyance} yields an approximate extraction of the wavefront set of $f$ at $x$ from point samples of $f$ where the sampling density may depend on $f$. This observation motivates the following definition.

\begin{definition}
Let $\mathcal{C} \subset L^2(\Real^2)$. A sequence $(\DWF_j)_{j\in \Natural}$ of mappings as in \eqref{eq:SeqOfWFExtr} is called an \emph{approximate wavefront set extractor}. We say that an approximate wavefront set extractor is
\begin{itemize}
\item \emph{clairvoyant} at $x \in \Real^2$ if the sequence satisfies \eqref{eq:clairvoyance} at $x$ for all $f \in \mathcal{C}$, and
\item \emph{ignorant} to $f \in \mathcal{C}$ at $x \in \Real^2$ if $d_H(\overline{\DWF_j( P_j(f) )_x}, \overline{\WF(f)_x}) \not \to 0$ as $j \to \infty$.
\end{itemize}
\end{definition}

\subsection{Non-existence of clairvoyant approximate wavefront set extractors}\label{sec:NoWaveExtr}

We will observe below that, for every $x \in \Real^2$, there does not exist a clairvoyant approximate wavefront set extractor if $\mathcal{C}$ contains at least all $k$-times differentiable functions, for a $k \in \Natural$. Even more severely, in this situation, every approximate wavefront set extractor is ignorant to a dense subset of $\mathcal{C}$ at $x$. 

\begin{theorem}\label{thm:NoClarevoyance}
Let $k \in \Natural$ and $\mathcal{C} \subset L^2(\Real^2)$ be such that $\mathcal{C} \supset C^k(\Real^2) \cap L^2(\Real^2)$. For every $x \in \Real^2$ we have that, for every approximate wavefront extractor $(\DWF_j)_{j\in \Natural}$, there exists a dense subset $\mathcal{M} \subset \mathcal{C}$ such that $(\DWF_j)_{j\in \Natural}$ is ignorant to all $f \in \mathcal{M}$ at $x$. In particular, no approximate wavefront set extractor is clairvoyant at $x$.
\end{theorem}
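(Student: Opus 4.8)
The plan is to argue by contradiction, reducing the density statement to a statement about the interior of a ``success set.'' Writing $\mathcal{G} := \{ f \in \mathcal{C} : d_H(\overline{\DWF_j(P_j f)_x}, \overline{\WF(f)_x}) \to 0 \}$ and $\mathcal{M} := \mathcal{C}\setminus\mathcal{G}$ (the set of all $f$ to which the extractor is ignorant), the theorem is equivalent to showing that $\mathcal{M}$ meets every ball of $\mathcal{C}$, i.e.\ that $\mathcal{G}$ has empty interior. The one structural lever I would exploit is that $\DWF_j$ sees only the projection: if $P_j f = P_j g$ then $\DWF_j(P_j f)_x = \DWF_j(P_j g)_x$. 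In tension with this, membership of $(x,\lambda_0)$ in $\WF(f)$ is governed solely by the decay of $\widehat{\varphi f}$ in the direction $\lambda_0$, hence by arbitrarily high frequencies; so adding a perturbation whose Fourier transform is supported outside $[-M,M]^2$ can switch a wavefront point on or off at $x$ without altering $P_j f$ for any $j \le M$. The proof forces these two facts to collide.

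Suppose, for contradiction, that $\mathcal{G} \supseteq B(f_0,\rho)\cap\mathcal{C}$ for some $f_0,\rho>0$. I may assume $f_0 =: \phi$ is band-limited, say $\phi\in\PW_{M_0}$, since $\bigcup_{\Lambda}\PW_{\Lambda}$ is dense in $L^2$ and consists of smooth $L^2$ functions lying in $\mathcal{C}$. I would then diagonalise against the fixed extractor to build a single $v_\infty=\phi+\sum_{l\ge1}w_l$ inside this ball on which it nonetheless fails. Having fixed frequencies below $M_{l-1}$, choose a helper $h_l=\phi+\sum_{i<l}w_i+u_l\in B\cap\mathcal{C}$, where $u_l$ is a small high-pass bump ($\widehat{u_l}$ supported in $\{|\xi|>M_{l-1}\}$) engineered so that $h_l$ is $C^k$ but not $C^\infty$ at $x$, with the single wavefront point $\WF(h_l)_x=\{\lambda_0\}$. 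Since $h_l\in B\subseteq\mathcal{G}$, convergence of $\DWF_j(P_j h_l)_x$ to the nonempty set $\{\lambda_0\}$ forces $\DWF_j(P_j h_l)_x\neq\emptyset$ for all $j$ past some threshold; I pick $j_l$ exceeding that threshold, $M_{l-1}$, and $j_{l-1}$, set $M_l:=j_l$, and let $w_l$ be the band of $u_l$ supported in $[-j_l,j_l]^2\setminus[-M_{l-1},M_{l-1}]^2$. Because $\phi+\sum_{i<l}w_i$ is band-limited below $M_{l-1}$ and every later bump is high-pass above $j_l$, this yields
\[
P_{j_l} v_\infty = P_{j_l} h_l,\qquad\text{hence}\qquad \DWF_{j_l}(P_{j_l} v_\infty)_x = \DWF_{j_l}(P_{j_l} h_l)_x \neq \emptyset .
\]

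It remains to make $v_\infty$ itself smooth at $x$. Each $w_l$ is band-limited, hence $C^\infty$; crucially, I copy only the band-limited slice of the singular bump $u_l$ and discard its high-frequency tail, which is exactly what carried the singularity of $h_l$. Choosing $\|w_l\|$ to decay faster than any power of $M_l$ makes $\widehat{v_\infty}$ decay rapidly, so $v_\infty\in C^\infty(\Real^2)\cap L^2\subseteq\mathcal{C}$ and $v_\infty\in B$. Then $\overline{\WF(v_\infty)_x}=\emptyset$ while $\DWF_{j_l}(P_{j_l}v_\infty)_x\neq\emptyset$ for every $l$, so $d_H(\overline{\DWF_{j_l}(P_{j_l}v_\infty)_x},\emptyset)=1\not\to0$ along $j_l\to\infty$. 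Thus the extractor is ignorant to $v_\infty$, i.e.\ $v_\infty\in\mathcal{M}\cap B$, contradicting $B\subseteq\mathcal{G}$. Hence $\mathcal{G}$ has empty interior, $\mathcal{M}$ is dense, and in particular no extractor can be clairvoyant at $x$.

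The main obstacle is precisely the reconciliation in the last paragraph: at every scale $j_l$ the data $P_{j_l}v_\infty$ must be indistinguishable from that of a genuinely singular function, yet the assembled $v_\infty$ must be globally smooth at $x$. The resolution—copying a band-limited (regularity-harmless) slice of each singular helper while throwing away the tail responsible for its singularity, and enforcing super-polynomial decay of the slice norms—is what needs careful estimation. A secondary technical point is keeping each helper $h_l$ inside $\mathcal{C}$: since we only know $\mathcal{C}\supseteq C^k(\Real^2)\cap L^2(\Real^2)$, the helpers must be $C^k$ (not merely distributions) while still possessing a nonempty wavefront set at $x$, which is exactly the regularity window between $C^k$ and $C^\infty$ that the definition of $\WF$ detects.
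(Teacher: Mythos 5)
Your architecture is genuinely different from the paper's: you argue by contradiction (the success set $\mathcal{G}$ contains a ball of $\mathcal{C}$), harvest stopping indices $j_l$ from the assumed success of the extractor on singular helpers $h_l$, and diagonalize with fresh high-pass bumps $u_l$; the paper instead constructs the bad function \emph{directly}, adding the next frequency band $P_n g - P_{n-1}g$ of one fixed singular $g\in C^k\cap L^2$ exactly when the extractor currently reports the empty set, and then runs a case analysis on the (non-)convergence of the extractor's outputs, with the density statement handled by starting the same construction at $P_{j_1}f$ for arbitrary $f\in\mathcal{C}$. Your reduction to ``$\mathcal{G}$ has empty interior,'' the identity $P_{j_l}v_\infty = P_{j_l}h_l$, and the deduction that success on $h_l$ forces $\DWF_j(P_jh_l)_x\neq\emptyset$ for large $j$ are all correct.

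However, there is a genuine gap at the smoothness step, precisely where you say ``careful estimation'' is needed. You require $\|w_l\|$ to decay faster than any power of $M_l$, but $w_l=P_{j_l}u_l$ is not a free parameter: its size is fixed the moment $u_l$ is chosen, and $u_l$ must be chosen \emph{before} $M_l=j_l$ is revealed, since $j_l$ is by definition the threshold the extractor produces on $h_l=\phi+\sum_{i<l}w_i+u_l$. The extractor is arbitrary, so $j_l$ may depend adversarially on $u_l$ (for instance, blowing up as $u_l$ is scaled down), and no pigeonhole over rescalings of a fixed bump rescues an inequality of the form $\|u_l\|\lesssim M_l^{-N}$. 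As written, the quantifiers are in the wrong order, $v_\infty\in C^\infty$ does not follow, and if $v_\infty$ is not smooth at $x$ the non-empty outputs along $(j_l)_l$ yield no contradiction. The gap is reparable, but it needs an extra idea: make the bumps increasingly regular in a Fourier-weighted sense, e.g.\ demand $\|(1+|\xi|)^{l}\widehat{u_l}\|_{L^1}\leq 2^{-l}$ while keeping $(x,\lambda_0)\in\WF(u_l)$ (possible, since finite weighted regularity is compatible with a non-trivial wavefront set --- this is the same existence assertion the paper uses for $g$). Then
\begin{equation*}
\|w_l\|_{C^l}\;\lesssim\;\bigl\|(1+|\xi|)^{l}\widehat{w_l}\bigr\|_{L^1}\;\leq\;\bigl\|(1+|\xi|)^{l}\widehat{u_l}\bigr\|_{L^1}\;\leq\;2^{-l},
\end{equation*}
\emph{uniformly in the unknown cutoff} $j_l$, because $\widehat{w_l}$ is a restriction of $\widehat{u_l}$; this gives $v_\infty\in C^\infty$ and closes your argument. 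Note that the paper's adaptive construction sidesteps this issue entirely: in the case where the extractor eventually reports non-empty sets, the construction freezes, so the limit is band-limited and its smoothness is automatic rather than obtained from a series estimate. (A minor secondary point: you need $\widehat{u_l}$ to vanish on the square $[-M_{l-1},M_{l-1}]^2$, not merely on the ball $\{|\xi|\leq M_{l-1}\}$, for $P_{j_l}v_\infty=P_{j_l}h_l$ to hold exactly.)
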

\begin{proof}
The proof proceeds in two steps. First, for a given approximate wavefront set extractor, $(\DWF_j)_{j\in \Natural}$, and a point $x\in \Real^2$, we construct a function $q \in\mathcal{C}$ such that $(\DWF_j)_{j\in \Natural}$ is ignorant to $q$ at $x$.
Second, we show that the set of such functions is dense in $\mathcal{C}$.

\smallskip

\textbf{Step 1:} Notice that Definition~\ref{def:WaveFrontSet} implies
\begin{equation}\label{eq:sumsOfWFSets}
    \WF(f_1+f_2) = \WF(f_1)
    \quad\text{for every $f_1 \in L^2(\Real^2)$ and $f_2 \in C^\infty(\Real^2) \cap L^2(\Real^2)$.}
\end{equation}
For $x \in \Real^2$, we now choose a function $g \in C^k\cap L^2(\Real^2)$ such that for a non-empty set $V$ we have that $\WF(g)\supset \{x\} \times V$. Since $k < \infty$, such a function always exists. We can also assume hat $\|(1+|\cdot|^2)^{k/2}\widehat{g}\|_{L^1(\Real^2)} < \infty$.

Then, by \eqref{eq:sumsOfWFSets}, we can conclude that $\WF(g - P_{j}g) \supset \{x\} \times V$ holds for every $j \in \Natural$. Moreover, by construction, we have $d_H(\overline{\WF(g)_x}, \emptyset) = 1$. To define the desired function $q$, we first set
\begin{align}\label{eq:constructionOfqn}
q_0 &\coloneqq P_{1} g, \nonumber
\\[0.75em]
q_n &\coloneqq
  \begin{cases}
    q_{n-1} +  (P_{n} g - P_{n-1} g) &\text{if $\DWF_{j-1}(P_{n-1}q_{n-1})_x = \emptyset$,}
    \\[0.5em]
    q_{n-1} & \text{otherwise,}
\end{cases}
\end{align}
for all $n \ge 1$. We have that by the Riemann-Lebesgue Lemma, for every $N \in \Natural$,
\begin{equation} \label{eq:summabilityProperty}
\sum_{n \leq N} \|q_n - q_{n-1}\|_{C^k} \lesssim \sum_{n \leq N} \left\|(1+|\cdot|^2)^{\frac{k}{2}}\left({\widehat{q}_n - \widehat{q}_{n-1}}\right)\right\|_{L^1}  \leq  \left\|(1+|\cdot|^2)^{\frac{k}{2}} \hat{g}\right\|_{L^1} < \infty.
\end{equation}
Moreover, by definition,
\[
    (P_{n} g - P_{n-1} g) \perp (P_{m} g - P_{m-1} g)
    \quad\text{for all $n \neq m$.}
\]
Hence, by the Pythagorean theorem, for every $N \in \Natural$,
\begin{equation} \label{eq:summabilityProperty2}
  \sum_{n \leq N} \|P_{n} g - P_{n-1} g\|_2^2 \leq \|g\|_2^2 < \infty.
\end{equation}
It now follows from \eqref{eq:summabilityProperty}, \eqref{eq:summabilityProperty2} and \eqref{eq:constructionOfqn} that $q_n$ is a Cauchy sequence in $C^k(\Real^2)$ and $L^2(\Real^2)$. Therefore $q_n$ converges to a limit $q \in L^2(\Real^2) \cap C^k(\Real^2) \subset \mathcal{C}$.
Furthermore, one of the following statements holds:
\begin{enumerate}
\item[(1)] $\overline{\DWF_j(P_{j}q_{j})_x}$ does not converge for $j \to \infty$;
\item[(2)] $\overline{\DWF_j(P_{j}q_{j})_x}$ converges to a limit $W$ such that $d_H({W}, \overline{\WF(q)_x}) \geq 1/2$;
\item[(3)] $\overline{\DWF_j(P_{j}q_{j})_x}$ converges to a limit $W$ such that $d_H({W}, \overline{\WF(q)_x}) < 1/2$.
\end{enumerate}
In Cases~(1) and (2), we directly obtain that $\DWF_{j}$ is ignorant to $q$ at $x$.
In Cases~(3), we obtain that there exists some $j_0$ such that
\begin{align}\label{eq:Case3Conclusion}
    d_H\bigl(\overline{\DWF_j(P_{j}q_{j})_x},\overline{\WF(q)_x} \bigr) < 1
\quad\text{for all $j \geq j_0$.}
\end{align}
We now consider the cases $\WF(q)_x  = \emptyset$ and $\WF(q)_x \neq \emptyset$ separately.
If $\WF(q)_x  = \emptyset$, then \eqref{eq:Case3Conclusion} implies that $\DWF_j(P_{j}q_{j})_x = \emptyset$ for all $j \geq j_0$ since no subset of $P(\mathbb{S}^1) \setminus \{\emptyset\}$ has a distance less than $1$ to the empty set. Therefore,
\begin{align}\label{eq:Case3SecondConclusion}
q-P_{j_0}q = \sum_{j > j_0} (P_{j} g - P_{j-1} g) = g - P_{j_0} g. 
\end{align}
We obtain from \eqref{eq:Case3SecondConclusion} that $\emptyset = \WF(q)_x = \WF(g)_x  \neq \emptyset$ which is a contradiction.

If $\WF(q)_x \neq \emptyset$, then $d_H(\overline{\WF(q)_x}, \emptyset) = 1$.
By the triangle inequality, this yields that there exists some $j_0$ such that $\DWF_j(P_{j}q_{j})_x \neq \emptyset$ for all $j \geq j_0$.
Therefore, $q = q_{j_0} \in \PW_{j_0}$ by definition, which implies that $W(q)_x = \emptyset$.
Hence, Case (3) does not occur, i.e., $(\DWF_j)_{j\in \Natural}$ is ignorant to $q$ at $x$.

\textbf{Step 2:} For an arbitrary $f \in \mathcal{C}$, there exists $j_1 \in \Natural$ such that
\[
\left\|f - P_{j_1} f \right\|_2 \leq \frac{\epsilon}{2}
\quad\text{and}\quad
\left\|g - P_{j_1} g \right\|_2\leq \frac{\epsilon}{2}.
\]
Define $q_{j_1} = P_{j_1} f$ and, for every $n \geq j_1$, we define $q_n$ as in \eqref{eq:constructionOfqn}. It is clear that $q_n$ converges to a limit $q_f\in C^k(\Real^2) \cap L^2(\Real^2)$. Also, it is straightforward to show that $\|q_f-f\|_{2}\leq \epsilon$.
Now using the same arguments as in Step~1, it follows that $(\DWF_j)_{j\in \Natural}$ is ignorant to $q_f$.
\end{proof}

\begin{remark}
\begin{itemize}
\item[(1)] The Theorem~\ref{thm:NoClarevoyance} and its proof also hold when ``wavefront set'' is replaced by ``$(k+1)$-wavefront set'' or ``singular support''. 
\item[(2)] The arguments in the proof of Theorem \ref{thm:NoClarevoyance} are independent from the domain $\Real^2$.
Indeed, the same result holds for functions defined in an open domain $\Omega \subset \Real^2$ and $x \in \Omega$.
Here we define the wavefront set of $f\in L^2(\Omega)$ as
\[
    \Bigl\{ (x,\lambda) \in \Omega \times \mathbb{S}^1 : (x,\lambda) \in \WF\big(\tilde{f}\, \big)
    \text{ where $\tilde{f} = f$ on $\Omega$ and $\tilde{f} = 0$ elsewhere}
    \Bigr\}.
\]
\item[(3)] Theorem~\ref{thm:NoClarevoyance} demonstrates that there does not exist any clairvoyant approximate wavefront set extractor for sufficiently large function classes.
Even more severely, for every $k \in \Natural$, every approximate wavefront set extractor fails on a dense subset of $L^2(\Real^2) \cap C^k(\Real^2)$.

On the other hand, if the function class is so small that every function is uniquely determined by its values on the grid, then one can construct a wavefront set extractor by storing the wavefront set for each function in a data base.

This shows that for most classical function classes it is impossible to analytically derive a wavefront set extractor. For function classes in applications, such as sets of images, wavefront set extractors could exist, but are potentially highly sensitive to the choice of $\mathcal{C}$.
\end{itemize}

\end{remark}

\section{Wavefront set and edge detectors}
\label{sec:WFset-edge}
The characterization of a wavefront set given in \eqref{eq:wavefrontSet} implies that detecting the wavefront set of a piece-wise smooth function with singularities along a smooth curve is equivalent to detecting edges and their normal direction.
Edge detection, which is one of the most well-studied problems in image processing, is therefore a sub-problem to wavefront set extraction. However, we are unaware of any wavefront set extractor in the literature that goes beyond edge or ramp detection.

We next recall some edge-orientation detectors from the literature.
We start by approaches based on filters, which is followed by a review of methods based on directional systems from applied harmonic analysis. We then compare these methods to the guiding principle. Finally, we will comment on recent data-driven algorithms for edge detection.

\subsection{Filter-based edge-orientation detectors}
The traditional way of detecting edges in digital images is to convolve the image with suitable convolution kernels to enhance edge-like features. These features can then be extracted using simple rules.
For example, convolution with local difference filters leads to the Roberts \cite{roberts1963machine}, Sobel \cite{duda1973pattern, sobel2014edge}, and Prewitt \cite{prewitt1970object} operators.
In \cite{marr1980theory} the authors convolve an image with the Laplacian of a Gaussian function. Here, the zeros of the resulting image are taken as estimate for the positions of edges in the original.
Potentially the most famous edge-orientation detection algorithm of this category is the Canny edge detector \cite{canny1986computational}, which is based on convolution with a Gaussian kernel with standard deviation $\sigma>0$. The obvious drawback of this algorithm is that the choice of the standard deviation $\sigma$ of the Gaussian window strongly influences the performance of the algorithm to the extent that a high $\sigma$ improves the robustness against noise, but might also remove high-frequency components.

There have been many efforts in addressing the shortcomings of the Canny edge detector.
In \cite{perona1990detecting} one extracts the positions and orientations of not only jump singularities but also of composite edges like ramp- and hat-like singularities.
Most recently, \cite{arbelaez2011contour} develops algorithms that apply higher-level heuristics to a set of patches retrieved from applying oriented gradient operators.

\subsection{Edge-orientation detectors based on directional systems}
An alternative to the filter-based approaches is to base the detector on multiscale directional transforms.
These transforms include shearlets, which were introduced in Subsection \ref{sec:Shearlets}, as well as curvelets, ridgelets, wavelets, or bandlets, see  \cite{jacques2011panorama} for a survey.

The idea is to first transform the given image using a directional system.
Next, one applies certain heuristics to the result by using the theoretical information on the behavior of the underlying transform at directed edge points.
In principle, the transforms given by directional systems mentioned above can be written as a series of convolutions with respect to directed filters.
Hence, this approach can, in a sense, be understood as a special case of the edge detectors in the previous section.

An example of such an approach, which served as the main inspiration of the forthcoming algorithm in Section~\ref{sec:Algorithm}, is the shearlet-based algorithm of \cite{yi2009shearlet}, which we will now explain in more detail. This algorithm seeks to detect and classify edges of {piece-wise} smooth functions $f$ with {piece-wise} smooth jump curves.
The algorithm is based on computing a sampled version of the shearlet transform of $f$, i.e., the algorithm computes
\begin{equation}\label{eq:ShearletCoeffs}
    \Bigl\{\mathcal{SH}(f)(2^{-j}, 2^{-\frac{j}{2}} k, x, \iota): j\in \Natural, k\in \Integer, x \in c\Integer^2, j_0 \leq j \leq J, |k| \leq 2^{\frac{j}{2}}, \iota \in \{-1,1\} \Bigr\}
\end{equation}
for given $j_0, J\in \Natural$ and $c>0$.
This algorithmic step can be performed using standard software libraries such as the ShearLab toolbox \cite{Kutyniok:2016:SFD:2888419.2740960}, see also Subsection~\ref{sec:DigShearTrafo}.
This computation is followed by a series of heuristics that are applied to the coefficients \eqref{eq:ShearletCoeffs} and which lead to a classification of points as point singularities, directed edges, corners, or smooth points.

Additionally, we mention an algorithm in \cite{rafael2015coshrem} that is based on the complex shearlet transform.
This algorithm computes two shearlet transforms, one with a symmetric and one with an anti-symmetric generator.
The relationship between both is then used to determine if a point is an edge, a ridge, or a smooth point.

\subsection{Data-driven edge-orientation detectors}
As we have already advocated previously, the only way to comply with the guiding principle in applications is to use a data-driven algorithm. For edge detection, this idea has already been followed in a series of papers. For example, \cite{bel2006edge} uses supervised learning of an edge classifier based on a technique called probabilistic boosting tree.
Another approach is DeepEdge that uses a convolutional neural network taking as an input candidate edges produced by a Canny edge detector as well as patches of the original image \cite{bertasius2015deepedge}.
To this input one then applies parts of the KNet \cite{krizhevsky2012imagenet} for feature extraction and a network with two fully-connected layers for classification.
Finally, the algorithms SEAL (simultaneous edge alignment and learning) \cite{yu2018simultaneous} and the CASENet (category-aware semantic edge detection network) \cite{yu2017casenet} perform high-level edge analysis with highly complex deep neural networks. The underlying CASENet is a 101-layered network. These methods are state-of-the-art for segmentation and edge detection.

\section{Higher-order wavefront set extraction and inverse problems}
\label{sec:RadonTransform}

The discussion of the previous section was limited to edge-orientation detection. In the context of certain inverse problems, also the detection of higher-order wavefront sets is relevant. In particular, this is the case for inverse problems, whose forward operator is either pseudo differential or Fourier integral. We will focus on the so-called X-ray transform which is the basis of the computed tomography (CT) imaging technique.

The goal of computed tomography (CT) is to get an image of the internal structure of an object by striking X-rays through the object coming from a known source. The rays travel on a  line $L$ from the source through the object to a detector, being attenuated by the material on the specific line. One can mathematically model this measuring procedure by defining the Radon transform:

\begin{definition}[Radon transform, \cite{quinto2012introxray}]
Let $f\in L^1(\mathbb{R}^2)$, $\varphi\in [0,\pi)$ and $s\in \mathbb{R}^2$. Moreover, the unit vector in direction $\varphi$ denoted as $\theta$ and its orthogonal vector $\theta^{\perp}$ are given by:
$$
\theta = \theta(\varphi) := (\cos\varphi, \sin\varphi) \quad \theta^{\perp} = \theta^{\perp}(\varphi) := (-\sin\varphi, \cos\varphi).
$$
The \emph{Radon transform} of $f$ is given by the line integral 
\begin{equation}
\label{eq:RayTransform}
\mathcal{R}f(s, \varphi) := \int_{x\in L(\varphi, s)} f(x)dx = \int_{-\infty}^{\infty} f(s\theta(\varphi)+t\theta^{\perp}(\varphi))dt,
\end{equation}
where the line $L(\varphi, s)$ is defined as
$$
L(s, \varphi):= \{ x\in\mathbb{R}^2: x\cdot\theta(\varphi) = s\}.
$$
\end{definition}

We will refer to $g\coloneqq \mathcal{R}(f)$ as the \emph{sinogram} of $f$. The task of CT reconstruction is to recover the function $f$ from noisy samples of its sinogram. This task is a classic example of an ill-posed inverse problem \cite{devison1983limited}. 
{The standard approach to overcome the ill-posedness is to incorporate a priori information, an approach that is known as \emph{regularization}}. 
There are plenty of regularization methods for computed tomography ranging from {established} model-based methods \cite{CTmode1, CTmodel2, CTmodel3}, to {more recent end-to-end data-driven approaches, for the latter see the survey in \cite{Arridge:2019aa}.}
Additionally, the sparsity of the solution under the shearlet representation \cite{bubba2018learning} or the adjoint operator \cite{Hybrid1} can be used for regularization. 
Below, we outline a general approach to implement a priori knowledge of the wavefront of the sinograms into  a regularization procedure.

Assuming for the moment that we could successfully build a digital wavefront set extractor. In the context of the tomography problem, this algorithm could be applied to the sampled sinogram $g$. To compute from this knowledge the wavefront set of $f$, we can invoke a certain relationship between the wavefront set of $f$ and its sinogram $g$. 
This relation between the wavefront set of a distribution and the wavefront set of its transformation under a Fourier integral operator (FIO) is known as \textit{micro-canonical relation}. The Radon transform $\mathcal{R}$ and its adjoint $\mathcal{R}^*$ are examples of FIOs (see e.g. \cite{quinto2012introxray}), and its micro-canonical relation is described in the next theorem. 

\begin{theorem}[Radon transform micro-canonical relation, \cite{quinto2012introxray}]
\label{thm:Radoncanon}
Let $f$ be a distribution with compact support. Let $x_0\in L(s_0,\varphi_0)$, $\theta_0 = \theta(\varphi_0)$, $\lambda_0 = (1,-x_0\cdot\theta_0^{\perp})/(1+(x_0\cdot\theta_0^{\perp})^2)^{1/2}$ and $k\in \mathbb{N}_0$. The \emph{micro-canonical relation} is the wavefront set correspondence
\begin{equation}
    \label{eq:Canonical}
    (x_0 ; \theta_0)\in \WF_k(f) \Longleftrightarrow ((s_0, \varphi_0); \lambda_0) \in \WF_{k+1/2}(\mathcal{R}f)
\end{equation}
in particular, for $\WF(f) = \bigcup_{k \in \mathbb{N}_0} \WF_k(f)$ it holds
$$
(x_0 ; \theta_0)\in \WF(f) \Longleftrightarrow ((s_0, \varphi_0); \lambda_0) \in \WF(\mathcal{R}f).
$$
\end{theorem}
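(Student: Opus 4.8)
The statement is the classical micro-canonical relation for the parallel-beam Radon transform, and the plan is to prove it by exhibiting $\mathcal{R}$ as a Fourier integral operator (FIO) and then reading off both the geometry of singularity propagation and the half-order regularity shift from its phase and symbol. First I would put $\mathcal{R}$ in oscillatory-integral form. Starting from $\mathcal{R}f(s,\varphi) = \int_{\mathbb{R}^2} f(x)\,\delta(s - x\cdot\theta(\varphi))\,dx$ and inserting the Fourier representation $\delta(u) = \int_{\mathbb{R}} e^{2\pi i \sigma u}\,d\sigma$ yields
$$\mathcal{R}f(s,\varphi) = \int_{\mathbb{R}}\int_{\mathbb{R}^2} f(x)\, e^{2\pi i \sigma (s - x\cdot\theta(\varphi))}\,dx\,d\sigma,$$
so that $\mathcal{R}$ is an FIO with a single phase variable $\sigma$, amplitude $1$, and phase $\Phi(s,\varphi,x,\sigma) = \sigma\,(s - x\cdot\theta(\varphi))$.

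Second, I would compute the canonical relation from $\Phi$. The stationarity condition $\partial_\sigma\Phi = 0$ recovers the incidence relation $s = x\cdot\theta(\varphi)$, i.e.\ $x \in L(s,\varphi)$. The cotangent covector over the source is $\xi = -\partial_x\Phi = \sigma\,\theta(\varphi)$, whose direction is exactly $\theta_0$; the cotangent covector over the target is $(\partial_s\Phi,\partial_\varphi\Phi) = \sigma\,\bigl(1,\,-x\cdot\theta^\perp(\varphi)\bigr)$, using $\theta'(\varphi)=\theta^\perp(\varphi)$, and normalizing to $\mathbb{S}^1$ produces precisely $\lambda_0 = (1,-x_0\cdot\theta_0^\perp)/(1+(x_0\cdot\theta_0^\perp)^2)^{1/2}$. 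Thus the canonical relation sends $(x_0;\theta_0)$ to $((s_0,\varphi_0);\lambda_0)$ and back, which is the asserted geometric correspondence.

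Third, to upgrade this to a biconditional on wavefront sets I would verify that the relation is a canonical graph — equivalently that the Radon double fibration in $\mathbb{R}^2$ satisfies the Bolker/nondegeneracy condition — so that no singularities are created or cancelled and $\WF(\mathcal{R}f)$ is the image of $\WF(f)$ under an honest bijection; this delivers both implications of the $\Longleftrightarrow$. For the regularity index, the symbol count (one frequency variable, amplitude order $0$, source and target both $2$-dimensional) makes $\mathcal{R}$ an FIO of order $-1/2$, so microlocally it maps $H^s$ to $H^{s+1/2}$, improving regularity by a half-derivative along each ray of the canonical relation. Equivalently, the Fourier-slice identity $\mathcal{F}_s[\mathcal{R}f](\sigma,\varphi) = \widehat f(\sigma\theta(\varphi))$ transports the radial decay of $\widehat f$ into the $s$-frequency decay of $\mathcal{R}f$, and the change of variables from the radial slicing to the joint $(s,\varphi)$-frequency contributes the extra half power. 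Converting the decay rate $(1+|\xi|)^{-k}$ into this scale, a $k$-regular directed point of $f$ becomes a $(k+1/2)$-regular directed point of $\mathcal{R}f$; taking complements gives the stated equivalence between $\WF_k(f)$ and $\WF_{k+1/2}(\mathcal{R}f)$, and summing over $k$ gives the full-wavefront-set version.

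The main obstacle is the quantitative half-order shift rather than the geometry. Pinning down the exact index $k+1/2$ requires the microlocal Sobolev continuity of FIOs of order $-1/2$, together with a careful reconciliation of the paper's decay-based definition of $\WF_k$ (polynomial decay $(1+|\xi|)^{-k}$) with the Sobolev scale in which the FIO order is naturally expressed; the stationary-phase bookkeeping for the $(s,\varphi)$-to-radial change of variables is precisely where the factor $1/2$ must be tracked. By contrast, verifying the canonical-graph/Bolker condition is comparatively routine in the $2$D parallel-beam setting, but it must be stated explicitly to rule out spurious microlocal artifacts and thereby secure both directions of the equivalence.
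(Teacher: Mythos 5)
The paper itself contains no proof of Theorem~\ref{thm:Radoncanon}: the result is imported verbatim, with citation, from Quinto's survey on X-ray tomography, so there is no in-paper argument to compare against line by line. Your route --- writing $\mathcal{R}$ as an oscillatory integral with phase $\sigma\,(s-x\cdot\theta(\varphi))$, reading off the canonical relation from $\partial_\sigma\Phi=0$, $\xi=-\partial_x\Phi$ and $(\partial_s\Phi,\partial_\varphi\Phi)$, checking the Bolker/canonical-graph condition, and invoking the order $-1/2$ of the resulting Fourier integral operator --- is exactly the standard proof strategy in that cited literature, and your derivation of the covector $\lambda_0=(1,-x_0\cdot\theta_0^{\perp})/(1+(x_0\cdot\theta_0^{\perp})^2)^{1/2}$ is correct.

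Two issues remain, one minor and one substantive. Minor: the step ``no singularities are created or cancelled'' does not follow from the canonical-graph property alone; you also need ellipticity of the amplitude to build a microlocal parametrix (or to argue via the normal operator $\mathcal{R}^*\mathcal{R}$, which is an elliptic pseudodifferential operator). Your amplitude is identically $1$, so this is free, but it must be stated --- it is the ellipticity, not the graph structure, that secures the converse implication. Substantive: the half-order shift you invoke lives in the Sobolev scale, where an FIO of order $-1/2$ maps microlocal $H^s$ to microlocal $H^{s+1/2}$, and indeed the theorem in the cited source is formulated for Sobolev wavefront sets $\WF_s$ defined through $H^s$ microlocal regularity. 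The paper's Definition~\ref{def:WaveFrontSet}, by contrast, defines $\WF_k$ through pointwise decay $(1+|\xi|)^{-k}$ of the localized Fourier transform (and only for integer $k$, so $\WF_{k+1/2}$ is not even defined there without extension). These scales do not translate index-for-index: in $\mathbb{R}^2$, decay of order $k$ on a cone gives microlocal $H^{s}$ only for $s<k-1$, and conversely $L^2$-type microlocal bounds do not yield pointwise decay without loss. You explicitly flag this reconciliation as ``the main obstacle'' but do not carry it out, so the exact biconditional with index $k+1/2$ in the paper's decay-based scale is not actually established by your argument. This gap is real, though it is arguably inherited from the paper itself, whose statement grafts Quinto's Sobolev-indexed theorem onto its own decay-based definition of $\WF_k$ without comment.
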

The relation \eqref{eq:Canonical} naturally induces an invertible map $\mathbf{can}: \Real^2 \times \mathcal{P}(\mathbb{S}^1) \to \Real^2 \times \mathcal{P}(\mathbb{S}^1)$ such that 
\begin{align}\label{eq:CanonicalMap}
\WF(\mathcal{R}f) = {\mathbf{can}}(\WF(f))
\end{align}
holds for all distributions $f$ with compact support.

After extracting the wavefront set of the sinogram one can use it in a regularization functional to regularize the inverse problem, i.e. minimize jointly the residual $\|\mathcal{R}f-g\|_2^2$ and the functionals 
$$
d(\mathbf{can}(\WF(f)), \WF((g)), \text{ or } d(\WF(f), \mathbf{can}^{-1}(\WF(g)),
$$
where $d$ is a suitable distance function between wavefront sets. 

The method that we propose below is able to compute $\WF(g)$ or a digitized version thereof very efficiently.

\section{Computing the digital wavefront set with shearlets and deep learning}\label{sec:Algorithm}

We propose an algorithm that replaces the heuristic approach of the shearlet-based edge detection and classification algorithm of \cite{yi2009shearlet} by a data-driven approach.
Concretely, instead of hand-crafted heuristics, we train a deep neural network using a variety of training data, adapted to the classification procedure at hand. The neural network takes as input the shearlet coefficients of an image and produces a set of point-direction pairs that are classified as elements of the wavefront set. We will present the construction of the classifier below and then present the computational realization of our algorithm in Subsection \ref{sec:DeNSEE} at the end of this section.

\subsection{Digital shearlet transform} \label{sec:DigShearTrafo}


The classifier to be constructed below is based on the shearlet transform of a digital image.
Therefore, we need to work with a digitized shearlet transform, defined on a digital domain of pixel images. The digital shearlet transform was introduced in \cite{kutyniok2012digital} and is defined as follows:

Let $M\in \Natural$, $J \subset \Natural$ be finite, $k_j \subset \Natural$ for all $j \in J$ and $K_j \coloneqq [-k_j, \dots, 0, \dots, k_j]$.
We pick $2 \sum_{j \in J} K_J +1$ matrices in $\Real^{M \times M}$.
We denote these matrices by $\phi^{dig}$ and $\psi_{j,k, \iota}^{dig} \text{ for } j \in J, k \in K_j, \iota \in \{ -1,1\}$.
To make the connection to the classical shearlet transform, we can think of $\psi_{j,k, \iota}^{dig}$ as a digitized version of $\psi_{2^{-j}, 2^{-j/2}k, 0, \iota}$ and of $\phi^{dig}$ as a digitized version of a low frequency filter.
A concrete construction of the matrices $\phi^{dig}$ and $\psi_{j,k, \iota}^{dig}$ can be found in \cite{kutyniok2012digital}. Then, we define the digital shearlet transform of an image $I \in \Real^{M \times M}$ by
\[
\mathrm{DSH}(I)(j,k,m,\iota) \coloneqq \begin{cases}
  \bigl\langle  
    I, T_m \psi_{j,k, \iota}^{dig}
  \bigr\rangle 
  & \text{ if $\iota \in \{-1,1\}$,} \\[0.5em]
  \bigl\langle  
    I, T_m \phi^{dig} 
  \bigr\rangle 
  & \text{ if $\iota = 0$,}
\end{cases}
\]
where $j \in J, k \in K_j$, $m \in \{1, \dots, M\}^2$, and $T_m: \Real^{M \times M} \to \Real^{M \times M}$ circularly shifts the entries of the elements of a matrix by $m$.

Thus the digital shearlet transform of an image $I \in \Real^{M \times M}$ is a stack of $2 \sum_{j \in J} (K_j-1) +1$ matrices of dimension $M \times M$.
In all our numerical experiments, we fixed $J = 4$ and $K_j = 2^{\lceil j/2 +1\rceil}+1$ so $2 \sum_{j \in J} (K_j-1) +1 = 49$.
The computation of the digital shearlet transform is performed by using the Julia implementation of ShearLab \cite{Kutyniok:2016:SFD:2888419.2740960} (\url{www.shearlab.org/software}).

\subsection{Network architecture} \label{sec:NetArch}

We train a neural network which, given a patch of the shearlet coefficients of a function, produces a prediction of which 
directions belong to the wavefront set of the function at the position associated with this patch. These patches are of size $21 \times 21 \times 49$.

The network architecture consists of four convolutional layers, with $2\times 2$ max pooling, ReLU activation and batch normalization, followed by a fully-connected layer with 1024 neurons, softmax activation function and a one-dimensional output. The network architecture is depicted in Figure~\ref{fig:NetworkArch}. 
We chose this architecture since it performed well in a series of tests while being of moderate size. Here we focused on networks with only a few layers because we expect that the shearlet transform already acts as the correct feature extractor of the problem. 
Therefore, the classifier does not need to learn the correct data representation. Nonetheless, it is conceivable that a deeper and larger neural network architecture could potentially lead to improvements for the classification results below.

We pick 180 directions $(\theta_i)_{i = 1}^{180}$. For each $\theta_i$, we then train a network $\Phi_i$ with the described architecture by passing patches of shearlet coefficients of images $I \in \Real^{M \times M}$ of the form
\begin{align}\label{eq:ShearletBatches}
(\mathrm{DSH}(I)(j,k,m,\iota))_{j \in J, k \in K_j, \iota \in \{-1,0,1\}, m \in [m^*_1-10, m^*_1+10] \times [m^*_2-10, m^*_2+10]},
\end{align}
where $m^* \in \{11, \dots, M-10\}^2$, to the network. The associated label to a batch of \eqref{eq:ShearletBatches} is $1$ if $I$ has a singularity with direction $\theta_i$ at $m^*$ and $0$ else.
In total, this procedure yields $180$ digital classifiers. We train one more network with the same data, but the label is $1$ if $I$ has no singularity at $m^*$ and $0$ else. This additional classifier is used in test cases where all competing algorithms only perform edge detection and not edge-orientation detection.

The final classifier is constructed by putting all of these 181 networks in parallel, producing one large network with 181 outputs. 
For every $21 \times 21 \times 49$ patch of shearlet coefficients, this classifier generates a vector of length 181 indicating if the underlying function is smooth at the center point of the patch and listing all directions of edges present at the center point.

\begin{figure}[h!]
\centering
\begin{tikzpicture}[x={(1,0)},y={(0,1)},z={({cos(60)},{sin(60)})},
font=\sffamily\small,scale=2]
%

\foreach \X [count=\Y] in {1.5,1.5,1.3,1.1}
{\draw pic (box1-\Y) at (\Y*0.75,-\X/2,0) {fake box=white!70!gray with dimensions 0.35 and {2*\X} and 1*\X};
\node[draw,single arrow, black,fill=black!50] at (\Y*0.75+0.45,0.2,0) {\scriptsize{ReLU}};
}

\foreach \X/\Col in {6.3/red,6.5/red,6.7/green,6.9/green, 7.1/blue, 7.3/blue}
{\draw[canvas is yz plane at x = \X*0.7, transform shape, draw = red, fill =
\Col!50!white, opacity = 0.5] (0,0.5) rectangle (2,-1.5);}

\draw[gray!60,thick] (6.1*.7,-0.1,-1.6) coordinate (1-1) -- (6.1*.7,-0.1,0.6) coordinate (1-2) -- (6.1*.7,2.,0.6) coordinate (1-3) -- (6.1*.7,2.1,-1.6) coordinate (1-4) -- cycle;
\draw[gray!60,thick] (7.5*.7,-0.1,-1.6) coordinate (2-1) -- (7.5*.7,-0.1,0.6) coordinate (2-2) -- (7.5*.7,2.,0.6) coordinate (2-3) -- (7.5*.7,2.1,-1.6) coordinate (2-4) -- cycle;
\foreach \X in {4,1,3}
{\draw[gray!60,thick] (1-\X) -- (2-\X);}

\node[draw,single arrow, orange,fill=orange!30] at (7.7*0.7,0.5,0) {Flatten};

\node[circle,draw,blue,fill=blue!30] (A1) at (8.6*0.7,1,0) {~~~};
\node[circle,draw,red,fill=red!30,below=4pt of A1] (A2) {~~~};
\node[circle,draw,green,fill=green!30,below=18pt of A2] (A3) {~~~};
\draw[circle dotted, line width=2pt,shorten <=3pt] (A2) -- (A3);

\node[circle,draw,fill=gray!60] (B1) at (10*0.7,1,0) {~~~};
\node[circle,draw,fill=gray!60,below=4pt of B1] (B2) {~~~};
\node[circle,draw,fill=gray!60,below=18pt of B2] (B3) {~~~};
\draw[circle dotted, line width=2pt,shorten <=3pt] (B2) -- (B3);

\node[circle,draw,fill=gray!60] (C1) at (11*0.7,0.7,0) {~~~};
\node[circle,draw,gray,fill=gray!60,below=7pt of C1] (C2) {~~~};

\begin{scope}[on background layer]
\node[orange,thick,rounded corners,fill=orange!30,fit=(A1) (A3)]{};
\node[gray,thick,rounded corners,fill=gray!10,fit=(B1) (B3)]{};
\node[gray,thick,rounded corners,fill=gray!10,fit=(C1) (C2)]{};
\end{scope}

\foreach \X in {1,2,3}
{\draw[-latex] (A\X) -- (B1);
 \draw[-latex] (A\X) -- (B2);
 \draw[-latex] (A\X) -- (B3);
 \draw[-latex] (B\X) -- (C1);}
\end{tikzpicture}
\caption{Illustration of the network architecture forming the foundation of the classifier. This network consists of four convolutional layers and one fully-connected layer. The colored block in the middle represents a stack of the output of the last convolutional layer. The colors correspond to the different channels.}
\label{fig:NetworkArch}
\end{figure}
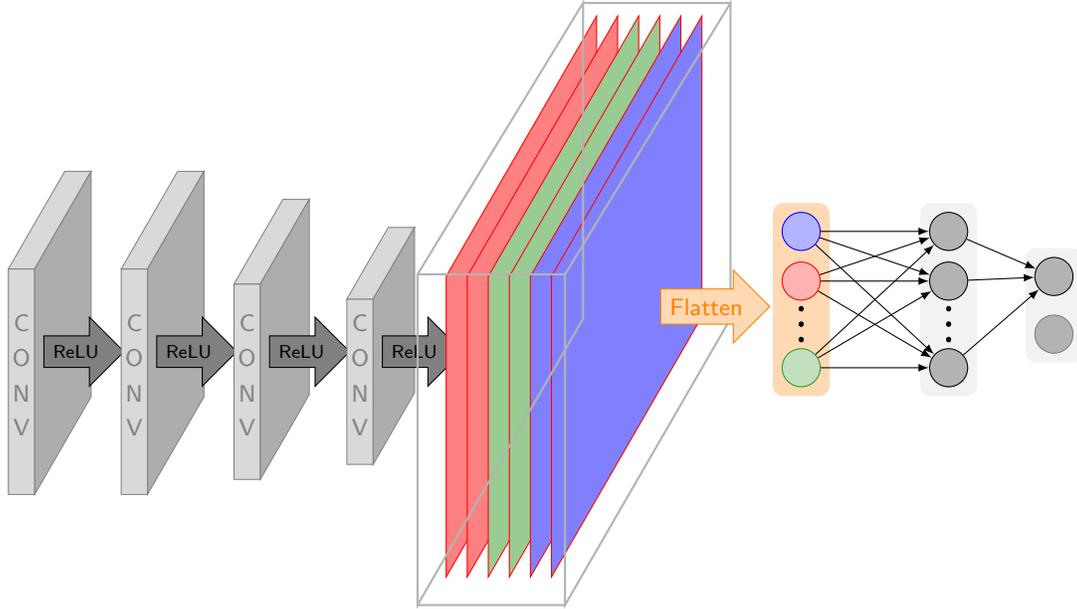

\subsection{Network training}\label{sec:Training}

We train the network as described in Subsection~\ref{sec:NetArch} using stochastic gradient descent to minimize the cross-entropy over a variety of training sets. We use five different data sets to train our classifier and test our algorithm:
\begin{enumerate}
\item[1)] The first data set consists of patches of the shearlet transform of images made of random sums of ellipses and parallelograms of different contrasts, sizes, and orientations.
\item[2)] The second data set contains random sums of ellipses and parallelograms convolved with a kernel to generate functions with a higher-order wavefront set.
\item[3)] The third data set is the BSDS500 (Berkeley Segmentation data set) provided by the Computer Vision Group of UC Berkeley. It comprises 503 natural images of different types. 
\item[4)] The fourth data set is the Semantic Boundaries data set (SBD) with 11355 natural images, again provided by the Computer Vision Group of UC Berkeley.
\item[5)] The fifth data set is constructed by taking Radon transforms of phantoms made of ellipses. The associated wavefront sets are computed analytically through the canonical relations of Definition \ref{def:CanonicalDigital}.
\end{enumerate}

\noindent We depict examples of functions from each of the data sets in Figures~\ref{fig:5}, \ref{fig:6}, \ref{fig:7}, \ref{fig:8}, and \ref{fig:9}.

To make these data sets suitable for our purposes, we need to equip each image of the data sets with an associated set of labels indicating the associated wavefront set or the set of edges. 
For the first two and the last data set, standard theoretical results on the wavefront sets of characteristic functions allow us to compute the associated wavefront sets analytically.
The segmentation and semantic boundaries data sets, on the other hand, are natural images where such an approach is not possible. These data sets are used to assess the quality of segmentation and contour detection applications, see \cite{hariharan2011semantic} and \cite{arbelaez2011contour}. Therefore, every image in these data sets was annotated and has a set of ground truth edges. However, we should point out that this annotated ground truth does not contain all edges of the images, but only those between semantically different parts of the images. We depict the annotated edges in Figures~\ref{fig:5}, \ref{fig:6}, \ref{fig:7}, \ref{fig:8}, and \ref{fig:9}.

In the following subsections, we describe the computation of the associated wavefront sets in detail.

\subsubsection{Ellipses and parallelograms}

The wavefront sets of characteristic functions of ellipses and parallelograms can be identified by \eqref{eq:wavefrontSet} and the fact that if $x$ is a vertex of a parallelogram $P$ then $\{x\} \times \mathbb{S}^1\subset \WF(\chi_P)$.
For sums of these functions, we have, by basic properties of the Fourier transform that
\[
    \WF(\chi_{P_1} + \chi_{P_2}) \subset \WF(\chi_{P_1}) \cup \WF(\chi_{P_2}).
\]
Note that in this relation we do not have equality in general. Indeed, if  $\WF(\chi_{P_1}) \cap \WF(\chi_{P_2})\neq \emptyset$ then cancellations can occur. We shall neglect this technicality as the probability of cancellations is sufficiently small and assume that the wavefront set of characteristic functions as described above is the union of the respective wavefront sets.

We build this data set by randomly choosing a number of parallelograms and ellipses with random positions and computing the associated ground truth of the wavefront set as described above.

\subsubsection{Higher-order wavefront data set}

The ellipses/parallelograms data set contains images with jump singularities only.
To test our method on functions with higher-order singularities, such as ramp singularities, we computed the convolution of the elements of the ellipses/parallelograms data set with a filter $h$ the Fourier transform of which is given by:
$$
\hat{h}(\xi) = \frac{1}{1+|\xi|}, \text{ for } \xi \in \Real^2.
$$
It is not hard to see that $P : f \mapsto h*f$ is an elliptic pseudo-differential operator and hence $\WF(h * g) = \WF(g)$ for all $g \in L^2(\Real^2)$, see \cite[Chapter 8 G]{folland1995introduction} for details. Thus, the convolutions of the elements of the ellipses/parallelograms data set with $h$ have the same wavefront set as the associated ellipses or parallelograms, but of a higher order.

\subsubsection{Segmentation and semantic boundaries data sets}
In the BSDS500 and the SBD data sets, the ground truth of the edges is given in form of binary images with 0's at positions where the image is smooth and 1's at locations associated to edges. This annotated edge set is depicted in Figure~\ref{fig:7}.

To compute the orientation of the edges, we used a five-point stencil derivative on the edges to approximate the normal vectors. To detect corners and assign the appropriate orientations we used the Harris corner detector \cite{harris1988combined}.
From these images, we produce patches for the training of the network classifier. However, due to the fact that the annotated image does not contain all edges we only use patches that are close to these edges for training, validation and testing. 

\subsubsection{{Tomographic} data}

We generate {tomographic data simulated from 2D phantoms similar to the commonly used Shepp-Logan phantom. Each phantom is represented by an $N\times N$ matrix, where} $N\times N$ is the resolution of the images. As in the case of the ellipses/prallelograms data set, these phantoms have an analytic wavefront set given by the union of the wavefront sets of all the ellipses. These are stored as binary matrices of dimensions $N \times N \times 180$ where a $1$ at position $(i,j,\theta)$ indicates that the direction with angle $\theta$ is in the wavefront set at point $(i,j)$. Next, we {simulate tomographic data (sinograms) using the Radon transform implementation in the ODL (\url{http://github.com/odlgroup/odl}) python library for inverse problems.} 

To label {each sinogram} with the correct wavefront set, we map the known wavefront sets of the {corresponding phantom using} a digitized version of the {microlocal canonical relation for the Radon transform} \eqref{eq:CanonicalMap} defined as follows:

\begin{definition}\label{def:CanonicalDigital}
Let $N \in \mathbb N$, we define 
\begin{align*}
    \mathbf{can}^N: \Real^{N \times N \times 180} &\to \Real^{N \times 180 \times 180}\\
    \mathbf{can}^N(X) &= Y,  
\end{align*}
where $Y_{s,\varphi,\lambda} = 1$ if and only if 
\begin{align*}
s &\in \left\lfloor [i_x, j_x] \cdot \left[\cos\left( \frac{\theta_x \cdot \pi}{180}\right), \sin\left(\frac{\theta_x\cdot \pi}{180}\right)\right]^T \right\rfloor + N \cdot \Integer, \qquad
\varphi  = \theta_x \text{ and }\\
\lambda &\in  \left\lfloor \frac{180}{\pi} \cdot \arctan\left(\left[\frac{i_x}{N}, \frac{j_x}{N}\right] \cdot \left[\sin\left( \frac{\theta_x\cdot \pi}{180}\right), -\cos\left(\frac{\theta_x\cdot \pi}{180}\right)\right]^T\right) \right\rfloor + 180 \cdot \Integer,
\end{align*}
for $(i_x,j_x,\theta_x)$ such that $X_{i_x,j_x,\theta_x} = 1$. We call $\mathbf{can}^N$ the \emph{digital canonical map}.

Moreover, we define an \emph{inverse digital canonical map} by 
\begin{align*}
    \mathbf{ican}^N: \Real^{N \times 180 \times 180} &\to \Real^{N \times N \times 180}, \qquad \mathbf{ican}^N:(Y) = X, 
\end{align*}
where $X_{i,j,\theta} = 1$, if and only if 
\begin{align*}
s_y &\in \left\lfloor [i, j] \cdot \left[\cos\left( \frac{\theta\cdot \pi}{180}\right), \sin\left(\frac{\theta\cdot \pi}{180}\right)\right]^T \right\rfloor + N \cdot \Integer, \qquad
\theta  = \varphi_y \text{ and }\\
\lambda_y &\in  \left\lceil \frac{180}{\pi} \cdot \arctan\left(\left[\frac{i}{N}, \frac{j}{N}\right] \cdot  \left[\sin\left( \frac{\theta\cdot \pi}{180}\right), -\cos\left(\frac{\theta \cdot \pi}{180}\right)\right]^T\right) \right\rceil + 180 \cdot \Integer,
\end{align*}
for a $(s_y,\varphi_y,\lambda_y)$ such that $Y_{s_y,\varphi_y,\lambda_y} = 1$. 
\end{definition}
We take Definition \ref{def:CanonicalDigital} as an Ansatz for the definition of the digital wavefront set of the digital Radon transform of a function of which we know the wavefront set. We depict one example of the elements of this data set in Figure \ref{fig:9}.

\subsection{DeNSE: Deep Network Shearlet Edge Extractor}\label{sec:DeNSEE}

We present our algorithm extracting the wavefront set of a digital image below. For $M \in \Natural$, and a digital image $I \in \Real^{M \times M}$, this algorithm produces, for every $m^* \in [11, M-10 ]^2$ a prediction of the wavefront set of $I$ at $m^*$. The algorithm proceeds along the following three steps: 
\begin{description}
    \item[\textbf{Step 1}:] Train the network classifier as of Subsection \ref{sec:NetArch} on a set of labeled training data.
    \item[\textbf{Step 2}:] For a given test image $I \in \Real^{M \times M}$, compute the digital shearlet transform of $I$ with $49$ shearlet generators: $(\mathrm{DSH}(I)(j,k,m,\iota))_{j \in J, k \in K_j, \iota \in \{-1,0,1\}, m \in [1, M]^2}$.    
    \item[\textbf{Step 3}:]
    For every $m^* = (m^*_1, m^*_2) \in [11, M-10]^2$, pass the patch
\begin{align} \label{eq:digitalShearletCoeffs}
\bigl(\mathrm{DSH}(I)(j,k,m,\iota) \bigr)_{j \in J, k \in K_j, \iota \in \{-1,0,1\}, m \in [m^*_1-10, m^*_1+10]\times [m^*_2-10, m^*_2+10]}
\end{align}
to the classifier of Step 1. If the classifier predicts that an edge with direction $\theta$ is present, then classify $(m^*, \theta)$ as an element of the wavefront set of $I$. 
 \end{description}
Henceforth, we refer to the above algorithm as Deep Network Shearlet Edge Extractor (DeNSE).

\section{Numerical results} \label{sec:numRes}

We implemented the training as described in the previous section using the GPU version of Tensorflow. To evaluate the classification quality, we use two quality measures, a mini-batch test average taken over all mini-batches and the so-called MF-score. The MF-score is computed as the mean of the F-score defined as
\[ F \coloneqq \frac{2 P R}{R+P}, \]
where $P$ is the \emph{precision}, i.e., the number of true positives divided by the sum of true and false positives, and $R$ is the \emph{recall}, i.e., the number of true positives divided by the sum of true positives and false negatives, \cite{sasaki2007truth}. The MF-score is often used for evaluating classification performance when the distribution of classes is uneven. This is, for example, the case in edge detection, since there usually are significantly fewer edge points than smooth points in an image. Moreover, these performance measures enable us to compare with the state-of-the-art \cite{yu2018simultaneous} on the respective data sets. In addition, the required code to reproduce the results is publicly available in {\url{http://www.shearlab.org/applications}}.

The implementations of the other methods that were used to compare the performance of DeNSE were taken from the publicly available github repositories provided by the authors of the methods (all available in the topic: \url{http://github.com/topics/edge-detection}), with exception of the Canny, Sobel algorithms that were taken from the python library OpenCV (\url{http://opencv.org/}). The data-driven methods were trained using the hyper-parameters proposed by the authors for the given {data set} (Berkeley segmentation set and SBD) without further tuning.

\subsection{Results for ellipses/parallelograms}

We train each of the 181 subnetworks as of Subsection~\ref{sec:NetArch} using 10,000 images as training data, 1,000 images as validation data, and 2,000 images as test data. For each direction $\theta_i$ we trained the associated subnetworks using a mini-batch procedure with 86 examples per batch and 3,000 training steps for each. We obtained an average test accuracy of 96.2\% (taking the average over all 181 classifiers) and an MF-score of 97.1\%. We also notice that the test accuracy of the individual classifiers was higher when classifying angles aligned to the discrete orientations of the underlying shearlet system.

We compared our method on the data set of the shearlet coefficients patches to other classifiers commonly used in machine learning namely: Logistic regression, Decision trees, $K$-nearest neighbors, Linear SVM, and Random forest, using their implementations in the python library {scikit-learn (\url{http://scikit-learn.org/})}. We report the performances of these classifiers in Table~\ref{table:Ellipses-accuracy}.

\begin{table}[htb!]
\centering
\begin{tabular}{l r r r r}
\textbf{Method} & \textbf{Test accuracy}  & \textbf{MF-score}  \\
\hline
Logistic regression & 45.7 &  48.9\\
\hline

Decision trees & 75.2 & 75.8 \\
\hline

Linear SVM & 46.5 & 50.3 \\
\hline

K-nearest neighbors & 72.7 & 73.2 \\
\hline

Random forest & 86.0 & 86.7 \\
\hline

{DeNSE} & \emph{96.2} & \emph{97.1} \\
\hline
\end{tabular}
\caption{Ellipses/parallelograms data set performance metrics in percentage.}
\label{table:Ellipses-accuracy}
\end{table}

By construction, the last of the 181 subnetworks corresponds to an edge-detector, where the achieved average test accuracy was 97.5\%, and the MF-score was 97.9\%, the performance benchmarks with other classical edge classifiers can be found in Table~\ref{table:Ellipses-edge}. Figure~\ref{fig:5} shows the results on an example of the ellipses/parallelograms data set.

We depict the classification for one instance of the test set of the parallelograms/ellipses data set in Figure \ref{table:Ellipses-edge} and compare the results with the classification by the heuristic approach by Yi-Labate-Easley-Krim \cite{yi2009shearlet}. We observe that our method performs significantly better in low contrast regions. Moreover, our algorithm appears to be more precise when differentiating between corners and edges. Here, we classify a point as a corner point if the classifiers predict at least two different orientations that differ by more than 10 degrees.
In Figures~\ref{fig:5}, ~\ref{fig:6},~\ref{fig:7} and~\ref{fig:8}, we indicate corners by white dots.

\begin{table}[h!]
\centering
\begin{tabular}{l r r r r}
\textbf{Method} & \textbf{MF-score}  \\
\hline
Canny \cite{canny1986computational} &  49.1\\
\hline

Sobel \cite{sobel2014edge} & 40.0 \\
\hline

BEL \cite{bel2006edge} & 63.3 \\
\hline

Yi-Labate-Easley-Krim \cite{yi2009shearlet} & 70.3 \\
\hline

CoShREM \cite{rafael2015coshrem} & 90.6 \\
\hline

{DeNSE} & \emph{97.5}  \\
\hline
\end{tabular}
\caption{Edge detection performances of edge detection algorithms on the Ellipses/parallelograms data set. The MF-Score is in percentage.}
\label{table:Ellipses-edge}
\end{table}

\begin{figure}[htb!]
\centering
\includegraphics[width = 0.38\textwidth]{./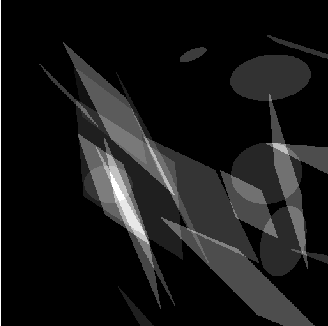}
\includegraphics[width = 0.38\textwidth]{./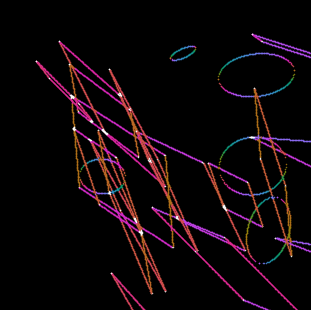} 
\\
\includegraphics[width = 0.38\textwidth]{./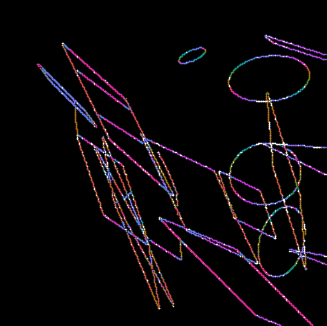}
\includegraphics[width = 0.38\textwidth]{./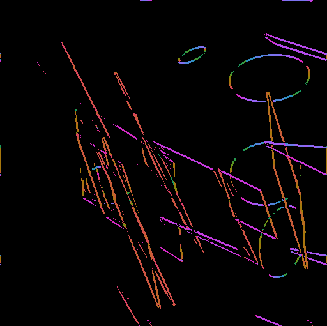}
\\ 
\includegraphics[width = 0.38\textwidth]{./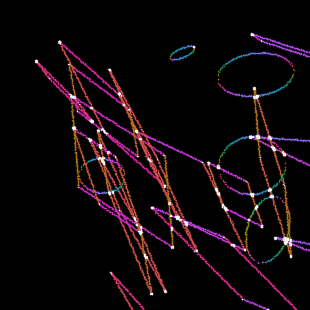}
\includegraphics[width = 0.38\textwidth]{./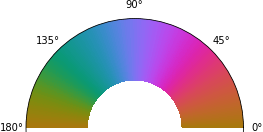}
\caption{Computed edges and orientations of an example of the ellipses/parallelograms data set. Top-left: Input image. Top-right: Orientations, human annotation. Middle-left: Orientations predicted by Yi-Labate-Easley-Krim algorithm. Middle-right:  Orientations predicted by CoShREM. Bottom-left: Orientations predicted by DeNSE\@ algorithm. Bottom-right: Color code for normal-directions.}
\label{fig:5}
\end{figure}

\begin{figure}[htb!]
\centering
\includegraphics[width = 0.38\textwidth]{./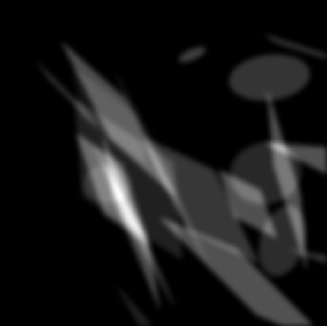}
\includegraphics[width = 0.38\textwidth]{./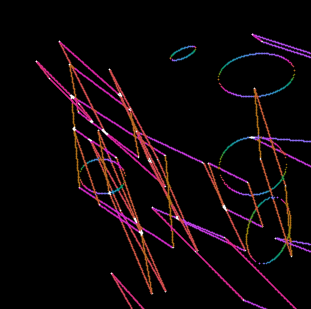}
\\
\includegraphics[width = 0.38\textwidth]{./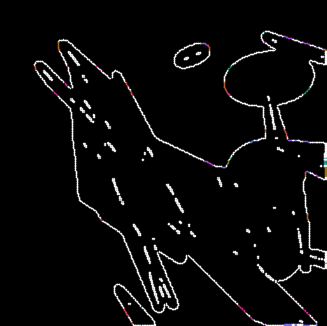}
\includegraphics[width = 0.38\textwidth]{./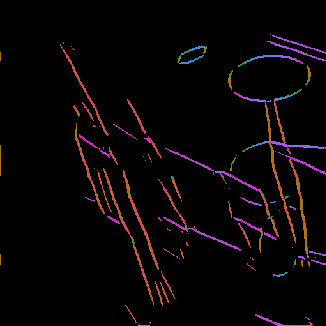}
\\
\includegraphics[width = 0.38\textwidth]{./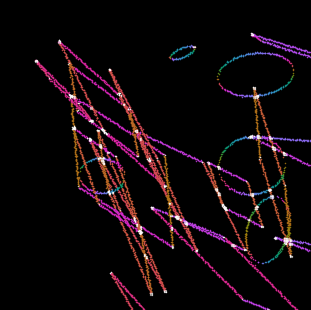}
\includegraphics[width = 0.38\textwidth]{./cmap2.png}

\caption{Computed edges and orientations of an example of the higher-order ellipses/parallelograms data set. Top-left: Input image. Top-right: Orientations, human annotation. Middle-left: Orientations predicted by Yi-Labate-Easley-Krim algorithm. Middle-right:  Orientations predicted by CoShREM. Bottom-left: Orientations predicted by DeNSE\@ algorithm. Bottom-right: Color code for normal-directions.}
\label{fig:6}
\end{figure}

\begin{table}[htb!]
\centering
\begin{tabular}{l r r r r}
\textbf{Method} & \textbf{MF-score} \\
\hline
gPb-owt-ucm \cite{arbelaez2011contour} & 73.7 \\
\hline

gPb \cite{arbelaez2011contour} & 71.5 \\
\hline

Mean Shift \cite{comaniciu2002meanshift} & 64.0 \\
\hline

Normalized Cuts \cite{shi1997cuts} & 64.2 \\
\hline

Felzenszwalb, Huttenlocher \cite{felzenszwalb2004segment}& 61.0 \\
\hline

Canny & 60.3 \\
\hline

CoShREM \cite{rafael2015coshrem} & 75.7 \\
\hline

DeepEdge \cite{bertasius2015deepedge} & 75.3 \\
\hline

{DeNSE} & \emph{95.4}\\
\hline
\end{tabular}
\caption{BSDS500 (Berkeley) data set performance metrics in percentage.}
\label{table:BSD-accuracy}
\end{table}

\begin{figure}[htb!]
\centering
\includegraphics[width = 0.38\textwidth]{./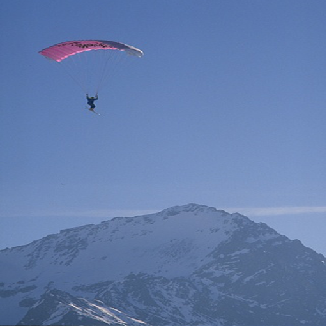}
\includegraphics[width = 0.38\textwidth]{./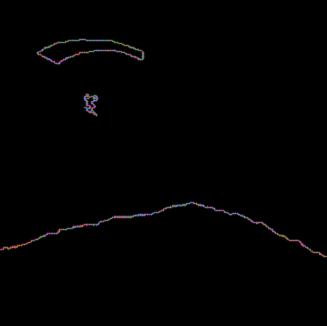}
\\
\includegraphics[width = 0.38\textwidth]{./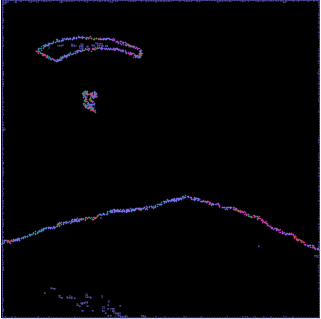}
\includegraphics[width = 0.38\textwidth]{./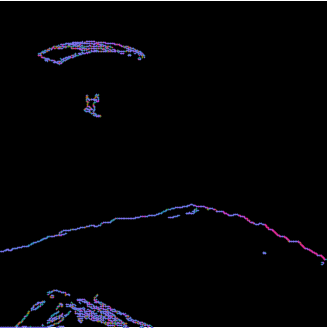}
\\ 
\includegraphics[width = 0.38\textwidth]{./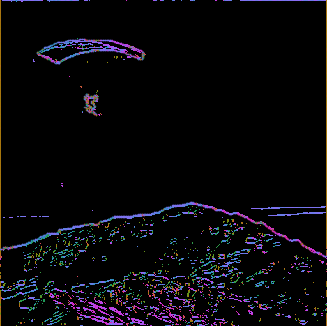}
\includegraphics[width = 0.38\textwidth]{./cmap2.png}
\caption{Computed edges and orientations of an example of the BSDS500 (Berkeley) data set.
Top-left: Input image. Top-right: Orientations, human annotation. Middle-left: Orientations predicted by gPb-owt-ucm. Middle-right:  Orientations predicted by CoShREM. Bottom-left: Orientations predicted by DeNSE\@ algorithm. Bottom-right: Color code for normal-directions.}
\label{fig:7}
\end{figure}

\subsection{Results for higher-order wavefront set data set}
We performed wavefront set detection for the the higher-order wavefront set data set, using the same procedure as in the ellipses/parallelograms classification. In this case, we used 30,000 patches as training data, 3,000 patches as validation data, and 6,000 patches as test data. We trained on 86-sized mini-batches, with 200,000 training steps. We obtained an average test accuracy of 93.4\% and an MF-score of 94.6\%. We are not aware of any algorithms specifically build for higher-order wavefront set detection, which is why we do not provide a comprehensive list of results of alternative algorithms in this case.

For completeness, we added Figure~\ref{fig:7} showing an example of the obtained results. We also add two predictions by the algorithm of Yi-Labate-Easley-Krim \cite{yi2009shearlet} and the method CoShREM  \cite{rafael2015coshrem}. 

It is important to mention that the algorithm of Yi-Labate-Easley-Krim is constructed to detect jump singularities and not higher-order singularities. Hence this algorithm is expected to fail on this data set. Indeed, the performance of the algorithm of achieves only an MF-score of 30.5\%. CoShREM, on the other hand, is built to detect edges and ridges. The performance was significantly better than that of Yi-Labate-Easley-Krim and resulted in an MF-score of 65.4\%.

\subsection{Results for Berkeley segmentation set}

In the Berkeley segmentation data set, the complexity of the images is considerably higher compared to the images from the ellipses/parallelogram data set. Therefore, we use a significantly larger training set to train the associated classifier. For the classification of each angle, we used 30,000 patches as training data (around 600 patches per image), 3,000 patches as validation data, and 6,000 patches as test data. As in the case of the ellipses/parallelograms, we train using a mini-batch procedure, with 86 examples per batch, but in this case, using 30,000 training steps for each. We obtained an average test accuracy of 93.1\% and MF-score of 95.4\%, which is lower than the one obtained in the ellipses/parallelogram. This is likely due to the higher complexity of the patches.
One advantage of this and the SBD data set is the existence of several benchmarks including state-of-the-art deep learning based algorithms.

We compared our method using the available benchmarks on this data set provided by the UC Berkeley Computer Vision Group, we refer to \cite{arbelaez2011contour} for a more detailed explanation of these methods. In \cite{arbelaez2011contour}, just the MF-score of the competing algorithms was reported. We give the results in Table~\ref{table:BSD-accuracy}.

We present one example of the results obtained on the BSDS500 data set in Figure~\ref{fig:7}.

\subsection{Results for semantic boundary set (SBD)}

The SBD data set contains significantly more images than the BSDS500 which, as we will observe below, improves the overall classification performance slightly. In this case, we used 100,000 patches as training data, 10,000 patches as validation data, and 20,000 patches as test data. We train on 86-sized mini-batches, with 100,000 training steps. We obtained average test accuracy of 95.3\% and MF-score of 96.8\%.

This data set has recently been widely used for image segmentation tasks, in particular, it was used on the two deep learning based image segmentation frameworks proposed by Z. Yu \textit{et al.}, namely the SEAL (Simultaneous Edge Alignment and Learning) \cite{yu2018simultaneous} and the CASENet (Category-Aware Semantic Edge Detection Network) \cite{yu2017casenet}. We also compared them with the deep learning image boundary detector and classifier proposed (OBDC) by J. Y. Koh \textit{et al.} \cite{koh2017object}. The results can be found on Table~\ref{table:SBD-accuracy}.

Figure~\ref{fig:8} shows the results obtained by DeNSE on an example image of the SBD data set, as in the case of the BSDS500 data set, the obtained result admits more edges than the ground truth due to the batch-based approach. Nonetheless, the method outperforms even the specialized algorithms for segmentation over the given data sets.

\begin{table}[htb!]
\centering
\begin{tabular}{l r r r r}
\textbf{Method} & \textbf{MF-score} \\
\hline
OBDC & 62.5 \\
\hline

CASENet & 71.8 \\
\hline

CASENet-S & 75.8 \\
\hline

CASENet-C & 80.4 \\
\hline

CoShREM & 69.7\\
\hline

SEAL & 81.1 \\
\hline

\emph{DeNSE} & \emph{96.8}\\
\hline
\end{tabular}
\caption{Performance on the SBD data set. All values are in percentage.}
\label{table:SBD-accuracy}
\end{table}

\begin{figure}[htb!]
\centering
\includegraphics[width = 0.38\textwidth]{./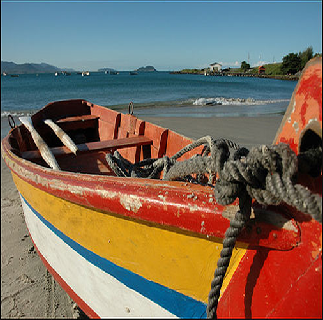}
\includegraphics[width = 0.38\textwidth]{./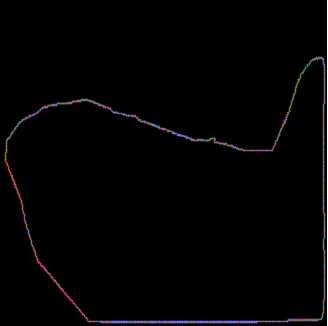}
\\
\includegraphics[width = 0.38\textwidth]{./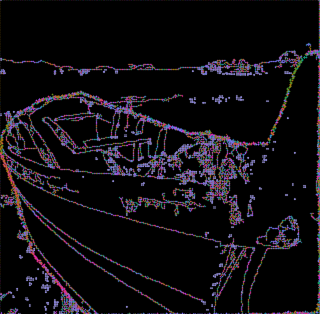}
\includegraphics[width = 0.38\textwidth]{./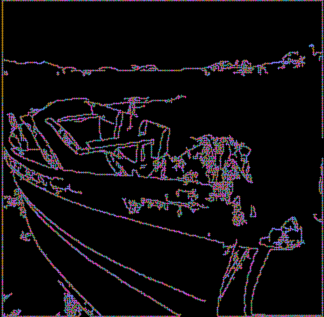}
\\ 
\includegraphics[width = 0.38\textwidth]{./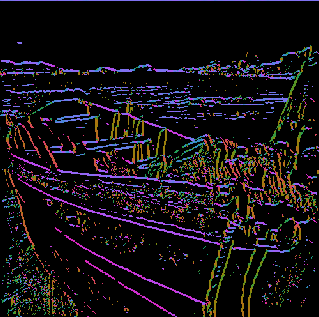}
\includegraphics[width = 0.38\textwidth]{./cmap2.png}
\caption{Computed edges and orientations of an example of the SBD data set.Top-left: Input image. Top-right: Orientations, human annotation. Middle-left: Orientations predicted by  SEAL. Middle-right: Orientations predicted by CoShREM. Bottom-left: Orientations predicted by DeNSE\@ algorithm. Bottom-right: Color code for normal-directions.}
\label{fig:8}
\end{figure}

\subsection{Results for tomographic data} \label{sec:SinogramResults}

{Learning a data-driven wavefront set extractor for tomographic data uses 10\,000 images with associated wavefront sets as supervised training data, 1\,000 images as validation data, and 2\,000 images as test data.} 
In this case, we obtained an average test accuracy of $94.5\%$ and $95.7\%$ of MF-score. This score is comparable to the performances of DeNSE on the other image classes. In this case, we could not perform any comparison with alternative methods, because we are not aware of any competing algorithm for the detection of wavefront sets of sinograms.

Extracting the wavefront set of the data and then applying canonical relations has the computational advantage that we do not need to solve the inverse problem. On top of this, it seems to also give improved wavefront set detection of the original signal over strategies that first solve the inverse problem and then detect the wavefront set. To substantiate this claim, we perform the following test. Using three standard inversion schemes, filtered backprojection, Tikhonov-regularized and total variation regularized inversion, we first compute an approximate inverse from a low dose CT image and then compute the associated wavefront set using DeNSE. We then note the average mean square error to the true wavefront set of the data point. We compare this with the error resulting from computing the wavefront set of the sinogram and transforming this to the wavefront set of the image through the canonical relations. We depict the wavefront sets extracted in this way in Figure \ref{fig:10}.

\begin{table}[htb!]
\centering
\begin{tabular}{l r r r r}
\textbf{Inversion technique} & \textbf{Mean square error}\\
\hline
Tikhonov & 443.0\\
\hline

Total variation & 380.9\\
\hline

Filtered backprojection & 504.3 \\
\hline
Canonical relations & 168.1 \\
\hline
\end{tabular}
\caption{Error of wavefront set estimation by different inversion techniques.}
\label{table:TV-results}
\end{table}

The results of Table \ref{table:TV-results} clearly demonstrate the advantage of first extracting the wavefront set of the sinogram and then applying the canonical relations over any first-invert-then-extract strategy.

\begin{figure}[htb!]
\centering
\includegraphics[width = 0.49\textwidth]{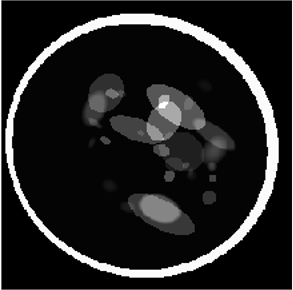}
\includegraphics[width = 0.49\textwidth]{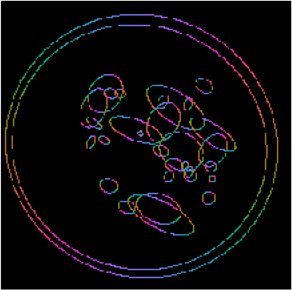}
\\[1em]
\includegraphics[width = 0.49\textwidth]{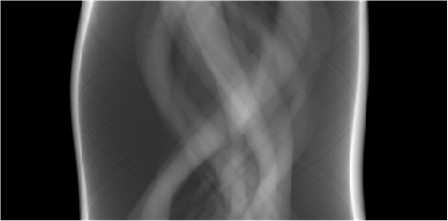} 
\includegraphics[width = 0.49\textwidth]{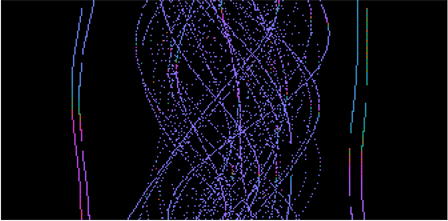}
\\[1em]
\includegraphics[width = 0.38\textwidth]{./cmap2.png}
\caption{Top-left: Phantom made from ellipses. Top-right: Associated wavefront set extracted by DeNSE. Middle-left: Radon transform of the phantom. Middle-right: Associated wavefront set computed through digital canonical relations of Definition \ref{def:CanonicalDigital}. Bottom: Color-code for normal directions.}
\label{fig:9}
\end{figure}

\begin{figure}[htb]
\centering
\includegraphics[width = 0.38\textwidth]{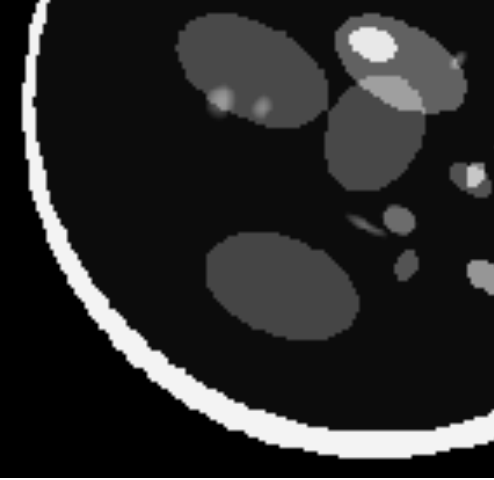}
\includegraphics[width = 0.38\textwidth]{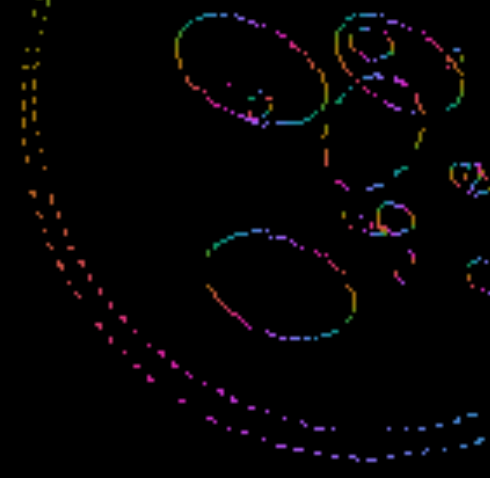}
\\
\includegraphics[width = 0.38\textwidth]{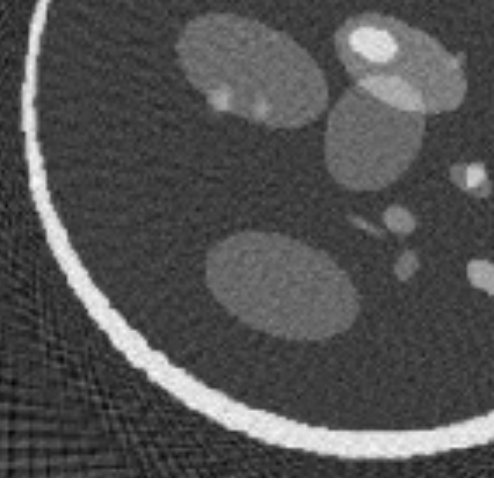}
\includegraphics[width = 0.38\textwidth]{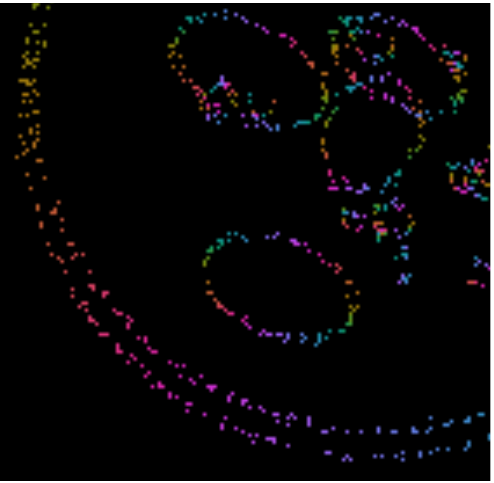}
\\ 
\includegraphics[width = 0.38\textwidth]{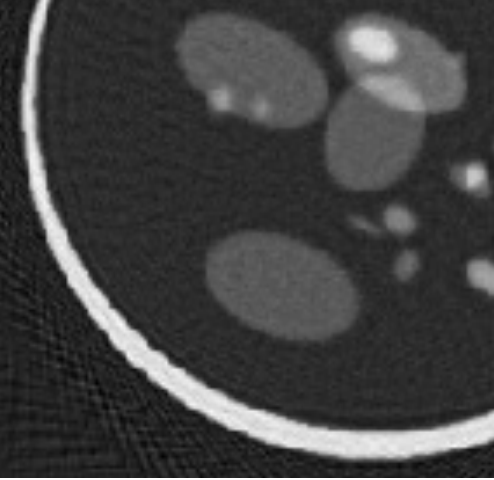}
\includegraphics[width = 0.38\textwidth]{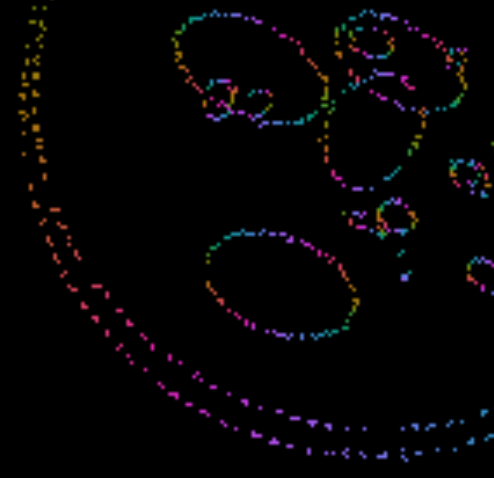}
\caption{Top-left: Phantom made from ellipses. Top-right: Wavefront set of the phantom data computed by the inverse canonical relation on the low dose sinogram wavefront set extracted by DeNSE. 
Middle-left: Filtered back projection reconstruction. Middle-right: Wavefront set of the filtered backprojection reconstruction extracted by DeNSE. 
Bottom-left: Tikhonov reconstruction. Bottom-right: Wavefront set of the Tikhonov reconstruction extracted by DeNSE.}\label{fig:10}
\end{figure}
\subsection{Conclusion}

We observe that in all performed tests, our novel algorithm DeNSE significantly outperforms all competitors. In doing so, DeNSE outperforms not only traditional methods, but also other, deep learning-based algorithms. Comparing the complexity of the involved neural networks reveals that the classifier of DeNSE uses a comparably small neural network.

One natural explanation for the jump in performance already with simple networks is the fact that the shearlet representation transforms the data in a much more convenient form for training and evaluation purposes at least from the point of view of wavefront set extraction.

Moreover, we observed strong performance of DeNSE for extraction of non-edge-like wavefront sets. In particular, the successful detection of the wavefront set of sinograms suggest a possibility to incorporate this information into regularizing schemes of inverse problems or for analysis purposes. 

We observed that through the canonical relations this approach allows the detection of the wavefront set of a phantom without solving the inverse problem. This is particularly interesting since this approach does not only circumvent an inversion step, but also improves the performance over algorithms that first invert the Radon transform and then extract the wavefront set.


\section*{Acknowledgements}

H.A.-L. is supported by the Berlin Mathematical School. G. K. acknowledges partial support by the Bundesministerium f\"ur Bildung und Forschung (BMBF) through the Berliner Zentrum for Machine Learning (BZML), Project AP4, by the
Deutsche Forschungsgemeinschaft (DFG) through grants CRC 1114 "Scaling Cascades in Complex Systems",
Project B07, CRC/TR 109 "Discretization in Geometry and Dynamics', Projects C02 and C03,
RTG DAEDALUS (RTG 2433), Projects P1 and P3, RTG BIOQIC (RTG 2260), Projects P4 and P9,
and SPP 1798 "Compressed Sensing in Information Processing", Coordination Project and Project Massive
MIMO-I/II, by the Berlin Mathematics Research Center MATH+ , Projects EF1-1 and EF1-4, and
by the Einstein Foundation Berlin. The work of O.\"O. was supported by the Swedish Foundation of Strategic Research grant AM13-004. P.P is supported by a DFG Research Fellowship ”Shearlet-based energy functionals for anisotropic phase-field
methods”.

\bibliographystyle{abbrv}
\bibliography{references}
\end{document}